\newcolumntype{L}[1]{>{\raggedright\let\newline\\\arraybackslash\hspace{0pt}}m{#1}}
\newcolumntype{C}[1]{>{\centering\let\newline\\\arraybackslash\hspace{0pt}}m{#1}}
\newcolumntype{R}[1]{>{\raggedleft\let\newline\\\arraybackslash\hspace{0pt}}m{#1}}
\newtheorem{theorem}{Theorem}
\newtheorem{lemma}{Lemma}
\newtheorem{claim}{Claim}
\newtheorem{definition}{Definition}
\newtheorem{corollary}{Corollary}
\newcommand{\I}{\mathcal{I}}
\newcommand{\oh}{\mathcal{O}}
\newcommand{\D}{\mathcal{D}}
\newcommand{\A}{\mathcal{A}}
\newcommand{\E}{\mathcal{E}}
\newcommand{\Q}{\mathcal{Q}}
\newcommand{\F}{\mathcal{F}}
\newcommand{\B}{\mathcal{B}}
\newcommand{\Pe}{\mathcal{P}}
\newcommand{\cgm}{\mathcal{CG}}
\newcommand{\repset}[1]{\subseteq_{rep}^{#1}}
\newcommand{\choosepq}{{{p+q}\choose{p}}}
\newcommand{\mytilde}[1]{\overline{#1}}
\newcommand{\npc}{\textsf{NP}-complete\xspace}
\newcommand{\nph}{\textsf{NP}-hard\xspace}
\newcommand{\wh}{\textsf{W[1]}-hard\xspace}
\newcommand{\cp}{{\sc Clique}\xspace}
\newcommand{\cmcs}{{\sc Connected Modulator to Connected Set Problem}\xspace}
\newcommand{\rr}{Reduction Rule}
\newcommand{\yes}{\textsf{YES}\xspace}
\newcommand{\no}{\textsf{NO}\xspace}
\newcommand{\fpt}{\textsf{FPT}\xspace}
\newcommand{\dg}{\textsf{deg}\xspace}
\newcommand{\true}{ {\sc true}\xspace}
\newcommand{\false}{ {\sc false}\xspace}
\newcommand{\represents}{{\sc represents}\xspace}
\newcommand{\bcpq}{{\textsc{BCP$_q$}}\xspace}
\newcommand{\bcptwo}{{\textsc{BCP$_2$}}\xspace}
\newcommand{\bcepq}{{\textsc{BCEP$_q$}}\xspace}
\newcommand{\bceptwo}{{\textsc{BCEP$_2$}}\xspace}
\newcommand{\vertexpart}{{\sc BCP$_2$}\xspace}
\newcommand{\edgepart}{{\sc BCEP$_2$}\xspace}
\newcommand{\rvertexpart}{{\sc Restricted BCP$_2$}\xspace}
\newcommand{\defparprob}[4]{
\begin{tcolorbox}[colback=gray!5!white,colframe=gray!75!black]
  \vspace{-1mm}
  \begin{tabular*}{\textwidth}{@{\extracolsep{\fill}}lr} #1  & {\bf{Parameter:}} #3 \\ \end{tabular*}
  {\bf{Input:}} #2  \\
  {\bf{Question:}} #4
  \vspace{-1mm}
\end{tcolorbox}
}
\newcommand{\defprob}[3]{
\begin{tcolorbox}[colback=gray!5!white,colframe=gray!75!black]
  \begin{tabular*}{\textwidth}{@{\extracolsep{\fill}}lr} #1   \\ \end{tabular*}
  {\bf{Input:}} #2  \\
  {\bf{Question:}} #3
  \end{tcolorbox}
}
\newtheorem{red_rule}{Reduction Rule}
\title{Parameterized Complexity of Graph Partitioning into Connected Clusters}
\author[1]{Ankit Abhinav}
\author[1]{Susobhan Bandopadhyay}
\author[1]{Aritra Banik}
\author[2]{Saket Saurabh}
	\affil[1]{National Institute of Science, Education and Research, An OCC of Homi Bhabha National Institute,
	Bhubaneswar 752050, Odisha, India. \\
	\{ankit.abhinav, susobhan.bandyopadhyay, aritra\}@niser.ac.in}
\affil[2]{The Institute of Mathematical Sciences, HBNI, Chennai, India \\
	saket@imsc.res.in}
\begin{document}
	
	\maketitle

\begin{abstract}
	
	Given an undirected graph $G$ and $q$ integers $n_1,n_2,n_3, \cdots, n_q$, balanced connected $q$-partition problem ($BCP_q$) asks whether there exists a partition of the vertex set $V$ of $G$ into $q$ parts $V_1,V_2,V_3,\cdots, V_q$ such that for all $i\in[1,q]$, $|V_i|=n_i$ and the graph induced on $V_i$ is connected. A related problem denoted as the balanced connected $q$-edge partition problem ($BCEP_q$) is defined as follows. Given an undirected graph $G$ and $q$ integers $n_1,n_2,n_3, \cdots, n_q$,  $BCEP_q$  asks whether there exists a partition of the edge set of $G$ into $q$ parts $E_1,E_2,E_3,\cdots, E_q$ such that for all $i\in[1,q]$, $|E_i|=n_i$ and the graph induced on the edge set $E_i$ is connected. Here we study both the problems for $q=2$ and prove that $BCP_q$ for $q\geq 2$ is $W[1]$-hard. We also show that $BCP_2$ is  unlikely to have a polynomial kernel on the class of planar graphs. Coming to the positive results, we show that $BCP_2$ is fixed parameter tractable (FPT) parameterized by treewidth of the graph, which generalizes to FPT algorithm for planar graphs. We design another FPT algorithm and a polynomial kernel on the class of unit disk graphs parameterized by $\min(n_1,n_2)$. Finally, we prove that unlike $BCP_2$, $BCEP_2$ is FPT parameterized by $\min(n_1,n_2)$.   

\end{abstract}

\section{Introduction}
The following combinatorial question was raised by Andras Frank in his doctoral dissertation~\cite{frank1975combinatorial}. 
\begin{quote}
	Given a $q$-connected graph $G$, $q$ vertices $\{v_1,v_2,\cdots ,v_q\}\subseteq V(G)$, and $q$ positive integers $n_1,n_2,\cdots, n_q$ such that $\sum_{i=1}^q n_i=|V(G)|$, does there exists a partition ${V_1,\cdots,V_q}$ of $V(G)$ such that $v_i\in V_i$, $|V_i|=n_i$ and $G[V_i]$ is connected for all $1\leq i\leq q$? 
\end{quote}

The question was partially answered by the author and the statement was proved for the case when $q=2$. The original conjecture was settled separately by  Gy\"ori~\cite{gyori1976division} and Lov\'asz~\cite{lovasz1977homology} and it is famously known as the Gy\"ori-Lov\'asz theorem. The proof given by Gy\"ori, was constructive. A polynomial time algorithm can be constructed to find a $q$ partition of a $q$-connected graph. Surprisingly, when the graph is not $q$-connected, the problem becomes \nph, for $k\geq 2$~\cite{camerini1983complexity}. The general problem is known as the Balanced Connected $q$-Partition Problem (\bcpq) and can be formally stated as follows.

\defprob{\textbf{Balanced Connected $q$-Partition Problem}(\bcpq)}{An undirected graph $G$, an integer $q$ and $q$ positive integers $n_1,n_2,\cdots, n_q$ such that $\sum_{i=1}^q n_i=|V(G)|$.}{Does there exist a partition ${V_1,\cdots,V_q}$ of $V(G)$ such that  $|V_i|=n_i$ and $G[V_i]$ is connected for all $1\leq i\leq q$?}

The approximation version of the problem is called the Max Balanced Connected $q$-Partition Problem. The objective of the problem is to maximize the minimum partition size.  The best known approximation algorithm known is due to Chleb{\'\i}kov{\'a}~\cite{chlebikova1996approximating} and it presents a $\frac{4}{3}$-factor algorithm. 
It is also known that \bcpq does not admit an $\alpha$-approximation algorithm with $\alpha < 6/5$, unless P = NP~\cite{chataigner2007approximation}. 

In this paper, we initiate the study of fixed parameter tractability of \bcpq. We have proved the following results.

\noindent
{\bf Our Results.} We have shown that \bcpq, for $q\geq 2$ is \wh. In Section~\ref{Sec:treewidth} we have shown that \bcptwo is fixed parameter tractable when parameterized by the treewidth. More specifically we have constructed a $\oh^*(k^{\oh(k)})$ algorithm for \bcptwo where $k$ is the treewidth of the given graph. In Section~\ref{sec:planar}, we considered the case when the underlying graph is planar. We have presented a $\oh^*(k^{\oh(k)})$ algorithm for \bcptwo where $k=\min(n_1,n_2)$. In Section~\ref{sec:planarnopolykernel} we have proved that under standard complexity theoretic assumptions there is no poly-kernel for \bcptwo in planar graphs. In Section~\ref{sec:udg}, we have proved that when the underlying graph is unit disk graph, there is a poly-kernel for \bcptwo parameterized by  $k=\min(n_1,n_2)$. In Section~\ref{sec:edge}, we have introduced the "edge" variant of the \bcptwo which we is defined as follows.

\defprob{\textbf{Balanced Connected $q$-Edge Partition Problem} (\bcepq)}{An undirected graph $G$, an integer $q$ and $q$ positive integers $n_1,n_2,\cdots, n_q$ such that $\sum_{i=1}^q n_i=|E(G)|$.}{Does there exist a partition ${E_1,\cdots,E_q}$ of $E(G)$ such that  $|E_i|=n_i$ and $G[E_i]$ is connected for all $1\leq i\leq q$?}

In section~\ref{sec:edge}, we have proved that unlike \bcptwo, \bceptwo is \fpt parameterized by $\min(n_1,n_2)$.

%


\subsection{Notations}

\noindent
{\bf Graph Notations.} 
Let $G$ be a graph. We use $V(G)$ and $E(G)$ to denote the set of vertices and edge of $G$, respectively. 
Throughout the paper we use $n$ and $m$ to denote $|V(G)|$ and  $|E(G)|$, respectively.  For a vertex $v$, we use $N(v)$ to denote set of its neighbors, and use $\dg(v)$ to denote $|N(v)|$. For any graph $G$ and a set of vertices $V'\subseteq V(G)$, we denote the subgraph of $G$ induced by $V'$ by $G[V']$. For any set of vertices $X\in V(G)$,  we denote $V(G)\setminus X$ by $\tilde{X}$.
Most of the symbols and notations used of graph theory are standard and taken from Diestel~\cite{diestel2012graph}. 

\noindent
{\bf Parameterized Complexity and Algorithms.} 
The goal of parameterized complexity is to find ways of solving \nph problems more efficiently than brute force: here the aim is to
restrict the combinatorial explosion to a parameter that is hopefully
much smaller than the input size. Formally, a {\em parameterization}
of a problem is assigning a positive integer parameter $k$ to each input instance and we
say that a parameterized problem is {\em fixed-parameter tractable
	(\fpt)} if there is an algorithm that solves the problem in time
$f(k)\cdot \vert I \vert ^{\oh(1)}$, where $\vert I \vert$ is the size of the input and $f$ is an
arbitrary computable function that depends only on the parameter $k$. Such an algorithm is called an \fpt algorithm and such a running time is called \fpt running time. There is also an accompanying theory of parameterized intractability using which one can identify parameterized problems that are unlikely to admit \fpt algorithms. These are essentially proved by showing that the problem is W-hard.

Another major research field in parameterized complexity is kernelization.
Kernelization is a technique that formalizes the notion of preprocessing. A kernelization algorithm for a parameterized problem replaces, in polynomial time, the given instance by an equivalent instance whose size is a function of the parameter. The reduced instance is called a kernel for the problem. Once we have a kernel, a brute force algorithm on the resulting instance answers the problem in \fpt time. If the kernel size is a polynomial function of the input, then the problem is said to have a polynomial kernel.

\begin{definition}[Composition \cite{BDF}]
	A composition algorithm (also called OR-composition algorithm) for a parameterized problem $\Pi \subseteq \Sigma^{*} \times \mathbb{N}$ is an algorithm that receives as input a sequence $((x_1 , k), \cdots , (x_t , k))$,	with $(x_i , k) \in \Sigma^* \times \mathbb{N}$ for each $1 \leq i \leq t$, uses time polynomial in $\sum_{i=1}^t |x_i| + k$, and 	outputs $(y, k') \in \Sigma^* \times \mathbb{N}$ with  (a) $(y, k' ) \in \Pi \Longleftrightarrow (x_i , k) \in \Pi$ for some $1 \leq i \leq t$ 	and (b) $k'$ is polynomial in $k$. 
	A parameterized problem is compositional (or OR-compositional) if there is a composition algorithm for it.
	
\end{definition}

It is {\em unlikely} that an \npc problem has both a composition algorithm and a polynomial kernel as suggested by the following theorem.

\begin{theorem}[\cite{BDF,FS}]\label{composition_thm}
	Let $\Pi$ be a compositional parameterized problem whose unparameterized
	version $\tilde{\Pi}$ is NP-complete. Then, if $\Pi$ has a polynomial kernel then ${\sf co}$-${\sf NP}\subseteq {\sf NP/poly}$.
\end{theorem}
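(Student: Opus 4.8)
The plan is to reduce the statement to the OR-distillation framework of Fortnow and Santhanam, treating the composition algorithm together with the hypothetical polynomial kernel as the two ingredients that together manufacture a distillation of the \npc language $\tilde{\Pi}$. Recall that an \emph{OR-distillation} of a language $L$ is a polynomial-time procedure that, given instances $x_1,\dots,x_t$ each of length at most $N$, outputs a single string $y$ of length polynomial in $N$ (crucially \emph{independent of $t$}) such that $y\in L$ if and only if $x_i\in L$ for some $i$. The deep ingredient I would invoke is the theorem of Fortnow and Santhanam: if an \npc language admits an OR-distillation, then ${\sf co}$-${\sf NP}\subseteq{\sf NP/poly}$. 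With this as a black box, it suffices to build an OR-distillation for $\tilde{\Pi}$.

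So assume that $\Pi$ is compositional and that $\tilde\Pi$ has a polynomial kernel, say producing instances of size at most $k^{c}$, while the composition algorithm produces an output whose parameter $k'$ is at most some fixed polynomial $p(k)$ in the common input parameter. Given instances $y_1,\dots,y_t$ of $\tilde\Pi$, each of length at most $N$, I would first \emph{bucket} them by parameter value. Under the standard unary encoding of the parameter in $\tilde\Pi$, each parameter is at most the instance length, so there are at most $N$ distinct buckets. Within each bucket all instances share a common parameter $k\le N$, so I can apply the composition algorithm to collapse the (possibly very many) instances of that bucket into a single parameterized instance whose parameter is at most $p(N)$ and which is a \yes-instance precisely when some instance in the bucket is. I would then apply the polynomial kernel to each of these at most $N$ composed instances, obtaining at most $N$ instances each of size at most $(p(N))^{c}$, a bound that is polynomial in $N$ and, decisively, does not grow with $t$.

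At this point I would have reduced the original OR of $t$ instances to an OR of at most $N$ small instances. To return a \emph{single} string, I would use that $\tilde\Pi$ is \npc: the problem ``does at least one of these $\le N$ instances lie in $\tilde\Pi$'' is in \textsf{NP} (guess the witnessing index and a certificate, then verify), and its input has size polynomial in $N$, so by NP-hardness of $\tilde\Pi$ a Karp reduction maps it to one equivalent instance $Y$ of $\tilde\Pi$ of size polynomial in $N$. Composing all steps, the map $(y_1,\dots,y_t)\mapsto Y$ runs in time polynomial in $\sum_i|y_i|$, outputs an instance of size polynomial in $N$ independent of $t$, and satisfies $Y\in\tilde\Pi$ if and only if some $y_i\in\tilde\Pi$; this is exactly an OR-distillation of $\tilde\Pi$. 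Invoking the Fortnow--Santhanam theorem then yields ${\sf co}$-${\sf NP}\subseteq{\sf NP/poly}$.

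The main obstacle, and the reason the result is nontrivial, is the Fortnow--Santhanam distillation theorem itself, whose proof is where the complexity-theoretic collapse is actually extracted: it shows by a covering and counting argument that an OR-distillation of an \npc language lets a nondeterministic machine certify non-membership using only polynomial advice. Granting that ingredient, the only real care needed in the reduction above is the bookkeeping that guarantees $t$-independence of the final output size, and this is precisely what composition buys us: it compresses an unbounded number of same-parameter instances into one whose parameter, and hence whose kernel, is bounded by a fixed polynomial in $N$ no matter how many instances share that parameter.
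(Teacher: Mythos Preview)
Your proposal is a correct sketch of the standard proof of this result, following the approach of the cited references: combine composition with the assumed polynomial kernel to produce an OR-distillation of the underlying \textsf{NP}-complete language, then invoke the Fortnow--Santhanam theorem to obtain the collapse. The bucketing by parameter value, the use of the kernel to make the output size independent of $t$, and the final Karp reduction to a single instance are all the right moves.

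However, there is nothing to compare against: the paper does not prove this theorem at all. It is stated as a known result with citations to \cite{BDF,FS} and used as a black box in Section~\ref{sec:planarnopolykernel}. So your write-up is not so much an alternative proof as a recapitulation of the argument from the cited literature, which the paper simply imports.
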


\noindent
{\bf Treewidth.} 
A tree decomposition of a graph $G$ is a pair $(T, \B)$ consisting of a tree $T$ and
a family $\B = \{B_t:t\in V(T)\}$ of sets $B_t \subseteq V(G)$, called bags, satisfying the following three conditions: 
\begin{description}
	\item[(I1)] $V(G)=\cup_{t\in V(T)}B_t$
	\item[(I2)] For every edge $(u,v)\in E(G)$, there exists a $t\in V(T)$ such that $u, v \in B_t$
	\item[(I3)] For every vertex $v\in V(G)$ the set, $T_v=\{t\in V(T)|v\in B_t\}$, induces a connected subtree of $T$. 
\end{description}

The width of a tree decomposition $(T, \B)$ is $\max{|B_t|- 1 : t \in V(T)}$. The
treewidth of $G$ is the minimum width over all tree decompositions of $G$. A tree decomposition $(T, \B)$ where $T$ is a rooted tree with root node $r$, is nice if  $B_r = \emptyset$ and $B_l = \emptyset$ for every leaf node $l$ of $T$. Further every non-leaf node of $T$ is of one of the following three types:

\begin{description}
	\item[Introduce vertex node:] a node $t$ with exactly one child $t'$ such that $B_t=B_{t'}\cup\{v\}$ for some $v\notin B_{t'}$. We say that $v$ is introduced at $t$.
	\item[Introduce edge node:] a node $t$, labeled with an edge $(u,v) \in E(G)$ such
	that $u, v \in B_t$, and with exactly one child $t'$ such that $B_t=B_{t'}$. We say	that edge $(u,v)$ is introduced at $t$.
	\item[Forget node:] a node $t$ with exactly one child $t'$ such that $B_t=B_{t'}\setminus\{v\}$ for some $v\in B_{t'}$. We say that $v$ is forgotten at $t$.
	\item[Join Node] a node $t$ with exactly two child $t_1$ and $t_2$ such that $B_t=B_{t_1}=B_{t_2}$.
\end{description}


\section{Hardness of \vertexpart}

In the following theorem, we show that the above problem is \nph as well as \wh.
\begin{theorem}$^\star$\footnote{Theorems marked with $\star$ are true for a more constrained version of the \vertexpart, i.e $G[X]$ is not only connected but also contains a path on $k$ vertices.}
	The \vertexpart is \wh  parameterized by the solution size. 
\end{theorem}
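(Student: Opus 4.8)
The plan is to give a polynomial-time parameterized reduction from \textsc{Multicolored Clique} — given a graph $H$ whose vertex set is partitioned into colour classes $V^1,\dots,V^t$, decide whether $H$ has a clique containing exactly one vertex of each class — which is \wh parameterized by the number of colours $t$ and whose unparameterized version is \npc. We may assume that every class is an independent set (intra-class edges are never used by a multicoloured clique) and, by adding a universal vertex of a fresh colour if needed, that $t$ is odd. From $(H,t)$ I would build a graph $G$ together with $n_1=k$ and $n_2=|V(G)|-k$, where $k=\Theta(t^{2})$ depends only on $t$, such that $G$ has a partition into connected parts of sizes $n_1,n_2$ if and only if $H$ has a multicoloured clique. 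Since $|V(G)|$ is polynomial and far larger than $k$, we have $\min(n_1,n_2)=k$, so an \fpt algorithm for \vertexpart in the solution size would solve \textsc{Multicolored Clique} in \fpt time; together with the polynomial reduction and the \npc source this also yields \nph-ness.

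The core of $G$ is an \emph{incidence gadget}: a vertex-vertex $\tilde u$ for each $u\in V(H)$ and, for each edge $e=uv\in E(H)$ (necessarily between two classes), an edge-vertex $\tilde e$ adjacent exactly to $\tilde u$ and $\tilde v$ — i.e.\ the subdivision of $H$. The intended small side is $V_1=\{\tilde s_1,\dots,\tilde s_t\}\cup\{\tilde e:e\in E(H[S])\}$ for a multicoloured clique $S=\{s_1,\dots,s_t\}$, so $G[V_1]$ is the subdivision of $K_t$ — connected, and for $t$ odd having a Hamiltonian path, which is exactly what the $\star$-variant exploits — and $k=t+\binom t2$ up to a constant number of bookkeeping vertices. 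Forcing $V_1$ into this shape rests on two mechanisms, both using that the complement $V_2=V(G)\setminus V_1$ must also be connected. First, \emph{selection gadgets} (one per colour class, hung off a global connector) together with the rigid budget $k$ guarantee that $V_1$ contains exactly one vertex-vertex per class. Second, for every non-edge $uv$ of $H$ we add a degree-two \emph{witness vertex} $w_{uv}$ adjacent only to $\tilde u$ and $\tilde v$ (with a private pendant, so that placing $w_{uv}$ in $V_1$ costs budget): if the chosen transversal were not a clique, some $w_{uv}$ would have both its neighbours in $V_1$ and hence be isolated in $V_2$ — impossible — so the transversal is a clique, and then (again by connectivity of $V_2$) exactly the $\binom t2$ clique edge-vertices are pulled into $V_1$, matching $k$ with no slack.

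To make $G[V_2]$ connected in the forward direction I would add a global connector $z$ adjacent to all vertex-vertices and all edge-vertices, but embed $z$ in a small clique of size larger than $k$, so that $z$ cannot lie in $V_1$; then $z\in V_2$ links all vertex-vertices of $V_2$, and the edge-vertices of $V_2$ hang off them as leaves, so $V_2$ is connected regardless of the structure of $H$. (An alternative is to preprocess $H$ so that $H$ minus any transversal stays connected, e.g.\ by attaching a universal vertex that is never forced out.) The two directions of correctness are then routine: a multicoloured clique yields the described partition, and conversely any valid partition of sizes $(n_1,n_2)$ forces, in turn, exactly one vertex-vertex per class (selection gadgets $+$ budget $+$ witnesses), the clique property (witnesses), and precisely the clique edge-vertices ($V_2$-connectivity $+$ budget), from which the multicoloured clique can be read off.

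The main obstacle is \emph{tightness}. The selection gadgets, the witnesses and their pendants, and the exact value of $k$ must be tuned so that the only size-$n_1$ connected subgraphs with connected complement are the clique-encodings; in particular one must rule out "hybrid" sets mixing vertex-, edge-, witness- and gadget-vertices, and — most delicately — rule out "padding" of $V_1$ with essentially harmless pendant-type edge-vertices, which would otherwise let a clique of size smaller than $t$ (always present in a multipartite graph) masquerade as a transversal. This no-slack requirement conflicts with keeping $G[V_2]$ connected in YES-instances, because every vertex one adds to guarantee connectivity of $V_2$ tends to hand $V_1$ an unintended shortcut; reconciling the two through the budget $k$ and local pendant constructions is the delicate technical core of the proof.
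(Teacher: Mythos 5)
Your proposal is a plan rather than a proof, and the part you defer --- the ``tightness'' argument --- is precisely the part that is missing and nontrivial. The selection gadgets are never specified, and nothing in the construction as described rules out a connected $k$-set $V_1$ of a completely different shape (for instance one vertex-vertex together with $k-1$ of its incident edge-vertices, whose complement stays connected through the global connector $z$). You correctly identify this as the delicate core, but identifying the obstacle is not the same as overcoming it, so the reduction is not established. The paper avoids the difficulty entirely with a much lighter reduction from \textsc{Clique}: given $(G,k)$, it turns $V(G)$ into a clique of the new graph $G'$ (so \emph{every} $k$-subset of $V(G)$ induces a connected subgraph, and connectivity of $G'[X]$ comes for free), and for each \emph{non-edge} $v_iv_j$ of $G$ adds a witness $u_{ij}$ adjacent only to $v_i,v_j$ together with $k+1$ pendants, so that $u_{ij}$ and its pendants can never enter $X$, and if both $v_i,v_j\in X$ then $u_{ij}$ and its pendants are cut off from the complement. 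Complement-connectivity alone then forces $X\subseteq V(G)$ and $X$ a clique; there is no budget bookkeeping, no selection gadget, and no padding issue. If you want to salvage your route, you essentially have to re-import this idea: make the intended small side automatically connected (or rigidly shaped) rather than trying to police its shape with budget arithmetic.

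A second, concrete error: the subdivision of $K_t$ does \emph{not} have a Hamiltonian path for $t\ge 4$, odd or not. It is bipartite with sides of sizes $t$ and $\binom{t}{2}$, and a Hamiltonian path in a bipartite graph forces the two sides to differ in size by at most one, whereas $\binom{t}{2}-t=t(t-3)/2\ge 2$ for $t\ge 4$. (You appear to have conflated Eulerian circuits of $K_t$ with Hamiltonian paths of its subdivision.) So even if your reduction were completed, the intended $V_1$ would not witness the $\star$-strengthening in the footnote, which asks for a path on $k$ vertices inside $G[X]$. In the paper's reduction $X$ is a $k$-clique, so that path comes for free.
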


\begin{proof}
	
	We show a reduction from the \cp. Let $\I (G,k)$ be an instance of the \cp. We create a new instance $\I' (G',k)$ of \vertexpart as follows.\\
	$V(G'))=V'= V\cup U \cup W$ where $V=V(G)$ induces a clique. For every non-neighbour $v_i,v_j\in V(G)$ we have a vertex $u_{ij}\in U$ where $v_i$ and $v_j$ are connected to $u_{ij}$ by an edge. For each vertex $u_{ij}\in U$ we have $k+1$ many vertices $\{w_{ij}^1, w_{ij}^2,\cdots, w_{ij}^{k+1}\}$ in $W$ where there is a edge between each $w_{ij}^{l}$ and  $u_{ij}$ where $1\leq l\leq k+1$ (see Figure \ref{fig:red_cmcs}). Formally $U=\{u_{ij}|(v_i,v_j)\notin E(G)\}$ and $W=\{w_{ij}^l|u_{ij}\in U \text{ and }1\leq l\leq k+1\}$. $E(G')=E_1\cup E_2\cup E_3$ where $E_1=\{(v_i,v_j)|(v_i,v_j)\in V\}$, $E_2=\{(v_a, u_{a,b})|u_{a,b}\in U\}\cup \{(v_b, u_{a,b})|u_{a,b}\in U\}$ and $E_3=\{(u_{ij},w_{ij}^l)|u_{ij}\in U, 1\leq l\leq k+1\}$.

	\begin{figure}[h!]
		\centering
		\includegraphics[width=.8\textwidth]{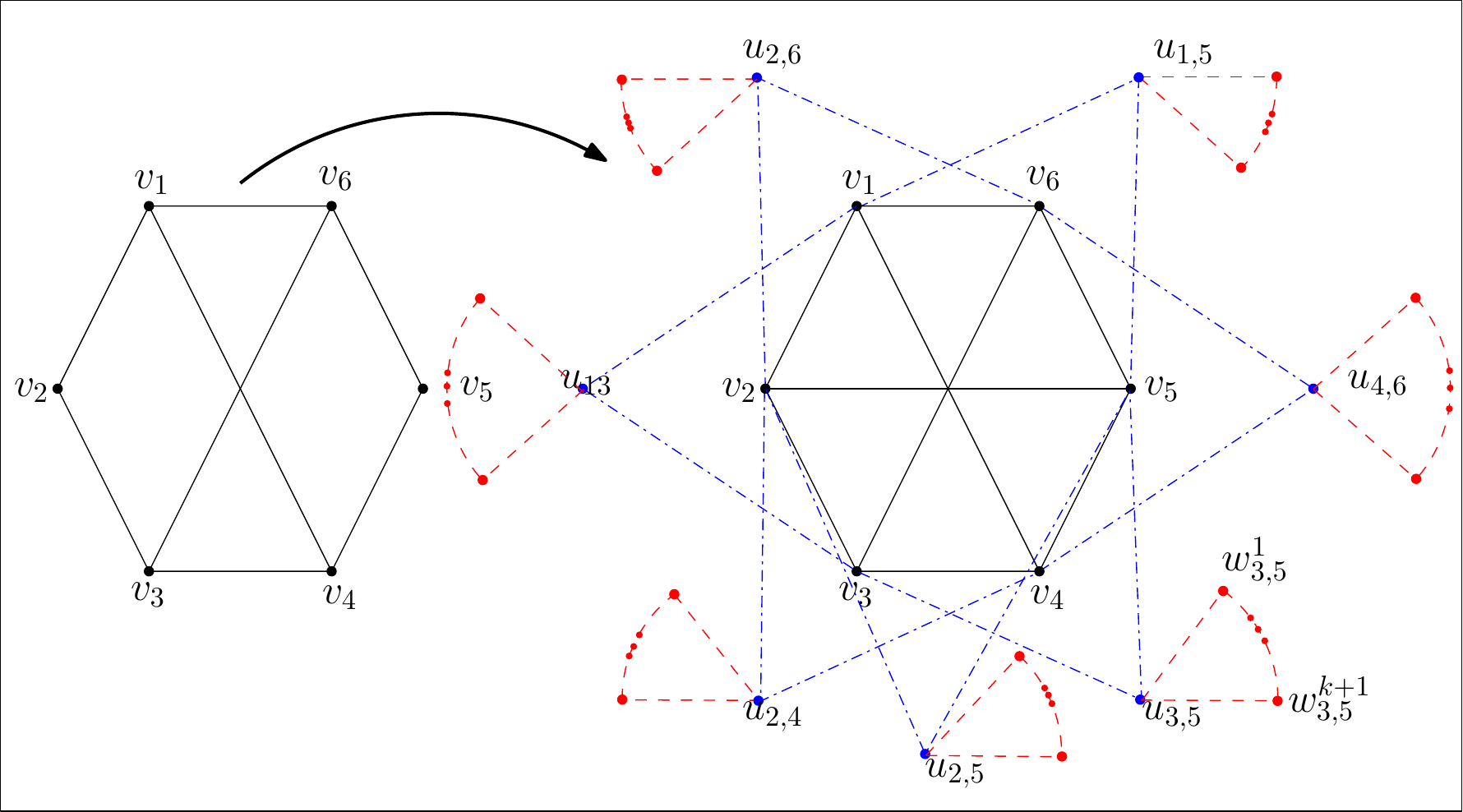}
		\caption{Construction of the \cmcs instance}
		\label{fig:red_cmcs} 
	\end{figure}
	\begin{claim}
		The $G$ contains a clique of size $k$ if and only if there exists a set $X \subseteq V(G')$ of cardinality $k$ such that $G'[V' \setminus X]$ and  $G'[X]$ both are connected.
	\end{claim}
	\begin{proof}
		
		Assume, $G$ contains a clique $S$ of size $k$. 
		Observe that $S$, also, induces a clique of size $k$ in the graph $G'$. Thus, $G[S]$ is connected. Now we show that $G'[V' \setminus S]$ is connected. 
		For the sake of contradiction, assume in $G'[V'\setminus S]$, there exist two vertices $x$ and $y$ which are not connected in $G'[V' \setminus S]$. 
		\begin{description}
			\item[Case 1: $x\in V\setminus S$ and $y\in V\setminus S$.] As $V\setminus S$ induces a clique in $G'$, there is an edge between $x$ and $y$. Therefore, contradiction.
			\item[Case 2: $x\in U$ and $y\in V\setminus S$ or vice versa.] Let $x=u_{i,j}$. As $v_i,v_j,y \in V$ they form a clique. Therefore, if either $v_i$ or $v_j$ is in $V\setminus S$ say $v_i$ then there is a path $(x,v_i,y)$ between $x$ and $y$ in $G'[V'\setminus S]$. On the other hand, $v_i$ and $v_j$ are nonadjacent in $G$ (by definition of $u_{ij}$). Thus, if $v_i,v_j\in S$ then it contradicts the fact that $S$ induces a clique in $G$.
			\item[Case 3: $x\in U$ and $y\in U$.] Let $x=u_{a,b}$ and $y=u_{c,d}$. Observe that as $S$ is a clique in $G$, and $v_a$ and $v_b$ are non-neighbours in $G$, $|S\cap \{v_a,v_b\}|\leq 1$. Without loss of generality assume $v_a\in V\setminus S$. Similarly, without loss of generality assume $v_c\in V\setminus S$. Now we have a path $(x,v_a,v_c,y)$ between $x$ and $y$ in $G[V'\setminus S]$. Thus, contradiction.
			\item [Case 4: At least one of $x$ or $y$ in $W$.] Let $x=w_{i,j}^l$. Observe that from the previous two cases (Case 2 and Case 3) we know that $u_{i,j}$ is connected to every vertex in $U$ and every vertex in $V\setminus S$. Hence, $u_{i,j}$ is connected to any vertex in $W$. Hence, $x$ is connected to all the vertices in $U\cup \{V\setminus S\}\cup W$. This concludes the proof.
		\end{description}

		Now we prove the reverse direction. Assume that in $G'$, there exists $X \subseteq V'$ of cardinality $k$ such that $G'[V' \setminus X]$ and $G'[X]$ both are connected. Observe that for all $u_{ij}\in U$, there are $k+1$ pendant vertices adjacent with $u_{ij}$. Thus, there exist at least one vertex $w_{ij}^l\notin X$. Hence, $u_{i,j}\notin X$ otherwise $w_{ij}^l$ will be isolated in $G[V'\setminus X]$. 
		
		Observe, as $X$ is connected, if there exist a vertex $w_{ij}^l\in X$ then $u_{ij}\in X$. We have ruled out that possibility, and hence $W\cap X=\emptyset$. 
		Hence, $X \subseteq V$. Next, we show that $X$ forms a clique in $G$. Assume that two non-neighbors $v_i,v_j\in X$.  Observe that $u_{ij}$ along with the vertices $\{w_{ij}^l|1\leq l\leq k+1\}$ is disconnected from the rest of the graph in $G[V'\setminus X]$. This contradicts the fact that $G[V'\setminus X]$ is connected. This concludes the proof.
	\end{proof}
	Observe that this reduction is a parameter preserving reduction where the parameters in both the cases are the same as $k$. Thus, this problem is \wh, parameterized by the solution size.
\end{proof}


\section{\bcptwo on Graphs with Bounded Treewidth}\label{Sec:treewidth}
Treewidth is a measure to determine the tree-like structure of a graph $G$. Many computationally hard problems become polynomial time solvable for graphs with bounded treewidth. Therefore, treewidth is a natural parameter and many hard problems are shown to be \fpt parameterized by the treewidth of the underlying graph. In this section, we present a \fpt algorithm for \bcptwo parameterized by the treewidth of the given graph. More specifically, we will consider the following problem.

\defparprob{\bf Balanced Connected $2$-Partition Problem (\bcptwo)}{An undirected graph $G$, two positive integers $n_1,n_2$ and  nice tree decomposition $(T, \B)$ of $G$ of width at most $k$ such that $n_1+n_2=|V(G)|$.}{k}{Does there exists a partition $\{V_1,V_2\}$ of $V(G)$ such that  $|V_i|=n_i$ and $G[V_i]$ is connected for all $1\leq i\leq 2$?}

Due to the famous Courcelle’s theorem~\cite{Courcelle}, it is also known that any problem expressible in {\bf MSO$_2$} is \fpt parameterized by treewidth.
\begin{theorem}\label{th:Courcelle}
	(Courcelle’s theorem, \cite{Courcelle}). Assume that $\varphi$ is a formula
	of {\bf MSO$_2$} and $G$ is a $n$-vertex graph equipped with evaluation of all the free variables of $\varphi$. Suppose, moreover, that a tree decomposition of $G$ of width $t$ is provided. Then there exists an algorithm that veriﬁes whether $\varphi$ is satisﬁed in $G$ in time $f (||\varphi||, t) \cdot n$, for some computable function $f$.
\end{theorem}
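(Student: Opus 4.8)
The plan is to prove Courcelle's theorem by the classical automata-theoretic route: I would translate the logical question ``$G \models \varphi$'' into an equivalent question about a finite \emph{tree automaton} running on a linear-size labelled tree that encodes $G$ together with its width-$t$ tree decomposition. The running time then comes essentially for free, since a deterministic bottom-up tree automaton can be evaluated in time linear in the size of the tree it reads, so the only dependence on $\varphi$ and $t$ is hidden in the (finite) number of states of the automaton.

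First I would fix a nice tree decomposition $(T,\mathcal{B})$ of $G$ of width $t$ and turn it into a labelled tree $\hat{T}$ over a \emph{finite} alphabet $\Sigma_{t}$. After normalising bags so that each bag identifies its $\le t+1$ vertices by indices in $\{0,\dots,t\}$, the label of a node records (i) its type (leaf, introduce-vertex, introduce-edge, forget, join), (ii) which indexed slots are occupied, (iii) the adjacencies introduced at that node, and (iv) the slot affected by an introduce/forget operation. Because $t$ is fixed, $|\Sigma_t|$ depends only on $t$; because the decomposition is nice, $|\hat{T}| = \mathcal{O}(t\cdot n)$. Crucially, every vertex of $G$ can be associated with the node at which it is forgotten and every edge with the node at which it is introduced, so that vertices and edges of $G$ are faithfully represented by nodes of $\hat{T}$. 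The evaluation of the free variables of $\varphi$ is folded into the alphabet by adding, for each free set variable, one bit per slot recording whether the represented object lies in the assigned set.

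Next I would carry out, by induction on the structure of $\varphi$, a translation turning the {\bf MSO$_2$} formula on $G$ into an ordinary {\bf MSO} formula $\hat{\varphi}$ over the labelled tree $\hat{T}$, so that $G \models \varphi$ iff $\hat{T} \models \hat{\varphi}$. Vertex and edge variables become variables ranging over the appropriately typed nodes of $\hat{T}$, and the incidence and adjacency predicates are rewritten using the bag-membership and adjacency bits stored in the labels together with the tree order, using the invariant that two vertices are adjacent exactly when the edge between them is introduced at some introduce-edge node. With $\hat{\varphi}$ in hand, I would invoke the Doner--Thatcher--Wright theorem: a class of $\Sigma_t$-labelled trees is {\bf MSO}-definable if and only if it is recognised by a finite deterministic bottom-up tree automaton. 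Compiling $\hat{\varphi}$ yields such an automaton $\mathcal{A}_{\hat{\varphi}}$ whose number of states is bounded by a computable function of $\|\varphi\|$ and $t$ only. Running $\mathcal{A}_{\hat{\varphi}}$ on $\hat{T}$ and testing acceptance at the root decides $G\models\varphi$ in time $f(\|\varphi\|,t)\cdot n$, as required.

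The hard part will be the compilation of $\hat{\varphi}$ into the automaton, i.e.\ the ``logic $\to$ automaton'' direction of the Doner--Thatcher--Wright correspondence. This is itself proved by induction on $\hat\varphi$: atomic formulas give trivial automata; conjunction and disjunction use the easy closure of recognisable tree languages under intersection and union; existential set quantification uses closure under projection (nondeterministically guessing the quantified set as an extra label bit); and negation forces closure under complementation, which in turn requires \emph{determinising} a nondeterministic tree automaton via a subset construction. Each quantifier alternation can therefore blow up the state set exponentially, so $f$ is in general non-elementary in $\|\varphi\|$, which is exactly why the theorem only promises a computable $f$ rather than an explicit bound. A secondary point one must nail down is the \emph{faithfulness} of the encoding — that the translation of adjacency, incidence, and set quantifiers is correct under the vertex-to-forget-node and edge-to-introduce-node correspondence, and that {\bf MSO$_2$} (with its quantification over edge sets) is genuinely captured, which is precisely why an incidence-style encoding, rather than a plain vertex encoding, must be used.
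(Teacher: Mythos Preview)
Your sketch is a faithful outline of the standard automata-theoretic proof of Courcelle's theorem (encode the graph together with a width-$t$ nice tree decomposition as a labelled tree over a finite alphabet, translate the {\bf MSO$_2$} formula to {\bf MSO} on that tree, then invoke the Doner--Thatcher--Wright equivalence between {\bf MSO} on trees and tree automata). The points you flag as delicate --- the incidence encoding so that edge-set quantification survives, the vertex-to-forget-node and edge-to-introduce-node correspondence, and the non-elementary blow-up from alternating projection and complementation --- are exactly the right ones.

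However, there is nothing to compare against: the paper does \emph{not} prove Theorem~\ref{th:Courcelle}. It is quoted from~\cite{Courcelle} and used as a black box to observe that \bcptwo is \fpt in treewidth, after which the paper moves on to give an explicit $k^{\oh(k)}$ dynamic program precisely because the $f$ in Courcelle's theorem is too large. So your proposal is not wrong, but it is a proof of a cited external result rather than of something the paper itself establishes.
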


Observe that \bcptwo can be defined by {\bf MSO$_2$} formula as follows.
$$\textsc{con-ver-per}(X)=\textsc{con}(X)\land \textsc{con}(V\setminus X)$$
where, for a graph $G$ on the set $V$ of vertices and $X\subseteq V$, it checks whether both $G[X]$ and $G[V\setminus X]$ are connected. We adopt the definition of $\textsc{con}(A)$ from~\cite{PCBook}.
\begin{multline*}
     \textsc{con}(A)=\forall_{B\subseteq V}[(\exists (a\in A ~\land ~a\in B) \land \exists (b\in A ~\land ~b\notin B)) \implies\\ (\exists_{e \in E}~\exists_{a\in A}~\exists_{b\in A} ~\textsc{inc}(a,e) \land \textsc{inc}(b,e) \land a\in B \land b\notin B)]
\end{multline*}
 where, given a graph $G=(V,E)$ and a subset $A$ of $V$, it checks whether $G[A]$ is connected.  $\textsc{inc}(u,e)$ checks whether the edge $e$ is incident on the vertex $u$. 

Therefore, by Theorem~\ref{th:Courcelle}, \bcptwo is \fpt.
Although Courcelle’s theorem provides an excellent tool to prove many problems to be \fpt parameterized by treewidth, in practice $f(||\varphi||, t)$ is a rapidly increasing function. Even for very simple {\bf MSO$_2$} formulas, this function can have super-exponential dependence on the treewidth of the graph \cite{10.5555/267871.267878}. Thus, efforts have been made to construct a dynamic programming-based algorithm with a single exponential dependency on the treewidth. In the rest of the section, we provide an efficient dynamic programming-based algorithm for \bcptwo with improved runtime.

Observe that if $n_1=|V(G)|$ or $n_1=0$, \bcptwo can be trivially solved. For ease of explanation from now onward, we assume that we are given two vertices $a$ and $b$ and the objective is to find a subset $V_1$ of size $n_1$ such that $a\in V_1,b\notin V_1$ and $G[V_1]$ and $G-V_1$ both are connected. 

We add $a$ and $b$ to every bag of $\B$.
Towards the construction of the dynamic programming, we define the following subproblem for each bag. 

For a node $t$ of $T$, let $V_t$ be the union of all the bags present in the subtree of $T$ rooted at $t$, including $B_t$. With each node $t$ of the tree decomposition, we associate a subgraph $G_t$ of $G$ defined as follows:
$$G_t =(V_t,E_t) \text{ where $E_t=\{e |~e$ is introduced in the subtree rooted at $t\}$}$$

We say the connected components $C_1, C_2, \cdots, C_b$ of a graph $G$ induce a partition $\Pe=\{P_1,P_2\cdots P_b\}$ of a vertex subset $X$ if $\forall i,~ V(C_i)\cap X= P_i $.
For a bag $B_t$ and an arbitrary subset of vertices  $V'\subseteq V(G)$, We define the following. $V_t^{V'}=V_t\cap V'$, $B_t^{V'}=B_t\cap V'$ and $G_t^{V'}=G_t[V']$. 

For any solution $(A,B)$, $G_t^A$ and $G_t^B$ will contain one or many connected components. Observe that each such connected component must intersect $B_t$, as $B_t$ is a separator between, $V_t\setminus B_t$ and the rest of the graph. We will capture this idea in the following recursive definition.

For a bag $B_t$, a set of vertices $B_t^U$, a partition $\Pe=\{P_1,P_2,\cdots, P_p\}$ of $B_t^U$, a partition $\Q=\{Q_1,Q_2,\cdots, Q_q\}$ of $B_t^W=B_t\setminus B_t^U$ and an integer $1\leq n_u\leq n_1$,  $\eta(t,B_t^U,B_t^W,\Pe,\Q,n_u)$ is \true if there exists a partition $(U,W)$ of $V_t$ where

\begin{itemize}
	\item $U$ induces exactly $p=|\Pe|$ connected components in, $G_t[V_t]$ and those connected components induce the partition $\Pe=\{P_1,P_2,\cdots, P_p\}$ of $B_t^U=B_t\cap U$.
	\item  $W$ induces exactly $q=|\Q|$ connected components in, $V_t$ and those connected components induce the partition $\Q=\{Q_1,Q_2,\cdots, Q_q\}$ of $B_t^W=B_t\cap W$.
	\item $|V_t^U|=n_u$
\end{itemize}



Please note that $B_t^W$ is a redundant parameter, it is kept in the definition for  ease of understanding. Note that there is a solution of size $(n_1,n_2)$ if $\eta(r,\{a\},\{b\},\{\{a\}\}, \{\{b\}\},n_1)$ is \true. Now we present a bottom up recursive computation of $\eta$ as follows

\vspace{10pt}
\noindent
{\bf Leaf Node.} If $t$ is a leaf node, set $\eta(t,\{a\},\{b\},\{\{a\}\}, \{\{b\}\},1)=\true$ 

\vspace{10pt}
\noindent
{\bf Introduce vertex node.} Let $v$ be the vertex introduced at $B_t$ and $t$ has only one child $t'$. As none of the edges incident on $v$ is introduced yet, $v$ must form a singleton set either at $\Pe$ or $\Q$. We present here the case when $\{v\}\in \Pe$. Set
$\eta(t,B_t^U,B_t^W,\Pe,\Q,n_u)=
\eta(t,B_t^U\setminus \{v\},B_t^W,\Pe\setminus \{\{v\}\},\Q,n_u-1)$ is \true,  set $\eta(t,B_t^U,B_t^W,\Pe,\Q,n_u)=\false$ otherwise.


\vspace{10pt}
\noindent
{\bf Introduce edge node.} Let the edge $(u,v)$ is introduced at $t$ and $t$ has only one child $t'$. There can be two cases. In the first case $u$ and $v$ either both belongs to $B_t^U$ or both belongs to $B_t^W$. In the second case one of the $u$ and $v$ belongs to $B_t^U$ and another in $B_t^W$.
We consider the Case 1 first. Without loss of generality assume $u,v\in B_t^U$.

$$\eta(t,B_t^U,B_t^W,\Pe,\Q,n_u)=
\left[
\bigvee\limits_{\Pe'}  
\eta(t',B_t^U,B_t^W,\Pe\setminus P_1\cup \Pe',\Q,n_u)
\right]
\bigvee
\eta(t',B_t^U,B_t^W,\Pe,\Q,n_u)
$$

Next we consider the case when one of the $u$ and $v$ belongs to $B_t^U$ and another in $B_t^W$. In this case 
$$\eta(t,B_t^U,B_t^W,\Pe,\Q,n_u)=\eta(t',B_t^U,B_t^W,\Pe,\Q,n_u)$$

\vspace{10pt}
\noindent
{\bf Forget Node.} Let $t$ be the forget node with $B_t=B_{t'}\setminus\{w\}$. 

\begin{equation*} 
\begin{split}
\eta(t,B_t^U,B_t^W,\Pe,\Q,n_u) & = \left[
\bigvee \limits_{\Pe'}
\eta(t',B_t^U\cup \{w\},B_t^W,\Pe',\Q,n_u)
\right] \\
&\text{\hspace{11pt}}
\bigvee\left[
\bigvee \limits_{\Q'}
\eta(t',B_t^U,B_t^W\cup \{w\},\Pe,\Q',n_u)
\right]
\end{split}
\end{equation*}

Here $\Pe'$ is a partition of $S_t\cup \{w\}$ { where, the vertex $w$ is being added to one of the parts in $\Pe$} and $\Q'$ is a partition of $R_t\cup \{w\}$ { where, the vertex $w$ is being added to one of the parts in $\Q$}.


\vspace{10pt}
\noindent
{\bf Join Node.} Suppose $t$ is a join node with children $t_1$ and $t_2$ . Recall that
$B_t=B_{t_1}=B_{t_2}$. Our objective here is to merge two partial solutions: one originating from $G_{t_1}$ and the second originating from $G_{t_2}$. Towards that, we define the following. For any set of vertices $A$ and a partition of $A$, $\A=\{A_1,A_2,\cdots, A_a\}$ we define a graph $G_{\A}$ with vertex set $A$ as follows. There is an edge between two vertices $a_1$ and $a_2$ if and only if $a_1$ and $a_2$ belongs to same partition of $\A$. For any set of vertices, $A$ we say two partitions of $A$, $\A_1$ and $\A_2$ \represents a third partition $\A$ if $G_{\A}=G_{\A_1}\cup G_{\A_1}$. 

$$\eta(t,B_t^U,B_t^W,\Pe,\Q,n_u)= \bigvee \limits_{\Pe_1,\Pe_2,\Q_1,\Q_2,n_u^1} 
[  
\eta(t_1,B_t^U,B_t^W,\Pe_1,\Q_1,n_u^1) 
\land 
\eta(t_2,B_t^U,B_t^W,\Pe_2,\Q_2,n_u^2)  
]$$

The OR runs over all possible choice of $\Pe_1,\Pe_2,\Q_1,\Q_2,n_u^1$ where $\Pe_1,\Pe_2$ \represents $\Pe$, $\Q_1,\Q_2$ \represents $Q$, $0\leq n_u^1\leq n_u$ and $n_u^2=n_u-n_u^1$. This concludes the dynamic programming description.

\vspace{10pt}
\noindent
{\bf Runtime.} The number of subproblems is determined by the number of different partition of $S_t$ and $R_t$ which is essentially bounded by the number of partition of $B_t$. The size of $B_t$ is bounded by $k_2$ and thus the number of partitions is bounded by $k^{\oh(k)}$. Time required to solve each sub-problem is  $k^{\oh(k)}$. Hence, the total runtime is $k^{\oh(k)}$.

\vspace{10pt}
\noindent
{\bf Note.} Runtime of the algorithm can be improved to $\oh(c^k)$ for some constant $c$ using standard techniques using representative sets.

%
%
%


\section{\vertexpart on Planar Graphs}\label{sec:planar}
\subsection{\fpt Algorithm for \vertexpart on Planar Graphs}\label{sec:planarfpt}
In this section, we give an \fpt algorithm for \vertexpart. Let us first consider a restricted version of the \vertexpart, where we are given a vertex $v$ such that $v\in X$. The problem definition is as follows. 

\defprob{\bf \rvertexpart}{An undirected graph $G(V,E)$ with $n$ vertices, an integer $k$ and a vertex $v$.}{Does there exist $X \subseteq V$ of cardinality  $k$ such that $G[X]$  and $G[V \setminus X]$ is connected and $v\in X$?}
Next, we show a relation between the original version and the restricted version of the problem in the following lemma.

\begin{lemma}         
	If \rvertexpart can be solved in \fpt time then the \vertexpart can also be solved in time \fpt.
\end{lemma}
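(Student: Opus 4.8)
The plan is to reduce an arbitrary instance of \vertexpart to $n$ instances of \rvertexpart, one for each choice of the distinguished vertex $v$, and show that the original instance is a \yes-instance if and only if at least one of these restricted instances is. Since $n$ is polynomial in the input size and each restricted instance is solved in \fpt time by hypothesis, the total running time is still \fpt.

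First I would observe that in any \yes-instance of \vertexpart with solution set $X$, the set $X$ is nonempty (we may assume $1 \le k \le n-1$, since the boundary cases $k=0$ and $k=n$ are trivial or degenerate and can be checked directly). Therefore $X$ contains at least one vertex, say $v^\star$. Then $(G,k,v^\star)$ is a \yes-instance of \rvertexpart, witnessed by the very same set $X$: indeed $|X|=k$, $G[X]$ and $G[V\setminus X]$ are connected, and $v^\star \in X$. Conversely, if $(G,k,v)$ is a \yes-instance of \rvertexpart for some vertex $v$, with witness $X$, then $X$ is a set of cardinality $k$ such that $G[X]$ and $G[V\setminus X]$ are both connected, so $X$ witnesses that $(G,k)$ is a \yes-instance of \vertexpart (the extra condition $v \in X$ is only a restriction, so dropping it preserves the \yes-answer). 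This establishes the equivalence.

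The algorithm for \vertexpart then proceeds as follows: handle the trivial cases $k \in \{0, n\}$ directly; otherwise, for each $v \in V(G)$, run the assumed \fpt algorithm for \rvertexpart on the instance $(G,k,v)$, and return \yes iff at least one of these $n$ calls returns \yes. Correctness is immediate from the equivalence above. For the running time, if the \rvertexpart algorithm runs in time $f(k)\cdot |I|^{\oh(1)}$, then the overall algorithm runs in time $n \cdot f(k) \cdot |I|^{\oh(1)} = f(k)\cdot |I|^{\oh(1)}$, which is \fpt time with the same parameter $k$.

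There is essentially no serious obstacle here; the lemma is a standard "guess the distinguished element" argument. The only points that require a moment's care are (i) checking that the solution set $X$ is nonempty so that a valid choice of $v$ always exists, and (ii) noting that adding the membership constraint $v\in X$ can only shrink the set of valid solutions, so a witness for \rvertexpart is automatically a witness for \vertexpart, and conversely every \vertexpart witness yields a \rvertexpart witness for at least one choice of $v$. Both are handled in the paragraph above.
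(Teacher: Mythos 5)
Your proposal is correct and follows the same route as the paper: enumerate all $n$ choices of the distinguished vertex $v$, solve the resulting \rvertexpart instances, and return \yes{} iff one of them is a \yes-instance. Your write-up is in fact more careful than the paper's (which omits the explicit equivalence argument and the trivial boundary cases), but the underlying idea is identical.
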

\begin{proof}
	Assume that the \rvertexpart can be solved in time \fpt. Observe that there are $n$ many choices for $v$, where $n$ is the number of vertices in the given graph. Thus given a graph we can create $n$ many restricted problem instances for this problem. Hence \vertexpart also can be solved in \fpt time.
\end{proof}

Let us first focus on the restricted version of the problem. Let the given vertex be $v$.  Define $V_k$ as the set of the vertices we reach till $k$-th level if we start doing breath first search (BFS) from the the vertex $v$. Since $|X|=k$ and $v_i\in X$, $X\subseteq V_k$. Observe that the graph $G[V\setminus V_k]$ might have several connected components say,  $G_1, G_2,\cdots, G_\ell$. Contract each connected component $G_i$ into a single vertex $u_i$. Let $U=\{u_i|i \leq \ell\}$. Next we consider the graph $G'=G[V_k\cup U]$ and make $N(u_i)= N(V(G_i))$ in $G'$. Observe that the diameter of $G'$ is $\oh(k)$. It is well known fact that a planar graph with diameter $d$ has treewidth $\oh(d)$. Thus treewidth of $G'$ is also bounded by $\oh(k)$. 

Now, using the algorithm described in Section~\ref{Sec:treewidth} we get the following theorem.
\begin{theorem}
    The \vertexpart on planar graphs can be solved in time $k^{\oh(k)}$.
\end{theorem}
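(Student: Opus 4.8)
The plan is to combine the structural observations already made in this section with the treewidth-based algorithm of Section~\ref{Sec:treewidth} and the reduction lemma relating \vertexpart to \rvertexpart. First I would invoke the lemma that it suffices to solve \rvertexpart in \fpt time, since there are only $n$ choices for the designated vertex $v\in X$; each choice costs polynomial overhead, so the bound $k^{\oh(k)}$ is preserved up to a polynomial factor. So the whole argument reduces to solving a single instance of \rvertexpart on a planar graph in time $k^{\oh(k)}$.

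For the restricted instance with designated vertex $v$, I would carry out the construction sketched just above: run BFS from $v$, let $V_k$ be the set of vertices within distance $k$, and note that because $|X|=k$ and $G[X]$ is connected with $v\in X$, every vertex of $X$ lies in $V_k$. The graph $G[V\setminus V_k]$ splits into connected components $G_1,\dots,G_\ell$; contract each $G_i$ to a single vertex $u_i$ with $N(u_i)=N(V(G_i))$, and form $G'=G[V_k\cup U]$ where $U=\{u_1,\dots,u_\ell\}$. The key point I would establish carefully is that $G'$ is again planar (contraction preserves planarity) and has diameter $\oh(k)$: every vertex of $V_k$ is within $k$ of $v$ by construction, and every $u_i$ is adjacent to some vertex of $V_k$ (namely a vertex on the BFS frontier that has a neighbour in $G_i$), so $u_i$ is within $k+1$ of $v$. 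By the well-known fact that a planar graph of diameter $d$ has treewidth $\oh(d)$, we obtain $\mathrm{tw}(G')=\oh(k)$, and a corresponding tree decomposition can be computed in time $2^{\oh(k)}\cdot n^{\oh(1)}$ (or via the constant-factor approximation of Bodlaender et al.).

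Next I would argue equivalence: $G$ has a set $X$ of size $k$ with $v\in X$ and both $G[X]$, $G[V\setminus X]$ connected if and only if $G'$ has such a set. In one direction, since $X\subseteq V_k$, the set $X$ is untouched by the contraction, $G'[X]=G[X]$ is connected, and $G'[V(G')\setminus X]$ is connected because contracting connected subgraphs $G_i$ of the connected graph $G[V\setminus X]$ keeps it connected. Conversely, if $G'$ has a valid set $X'$, then since each $u_i$ has $k+1$ pendant-like safety guaranteed by the fact that the $G_i$ are non-trivial components — more simply, $X'\subseteq V_k$ because $|X'|=k<n$ and including a contracted vertex $u_i$ that represents $|V(G_i)|>k-1$ original vertices would make $|X|$ too large once we un-contract, so actually I would instead run the treewidth DP directly on $G'$ but with a modified objective that forbids the contracted vertices from entering $X$ and requires that, when counting sizes, the algorithm from Section~\ref{Sec:treewidth} is applied to the original vertex counts; it is cleanest to simply solve \rvertexpart on $G$ restricted to $V_k\cup U$ using the Section~\ref{Sec:treewidth} DP adapted so that each $u_i$ carries weight $|V(G_i)|$ on the $V\setminus X$ side. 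Either way, a valid $X'$ un-contracts to a valid $X$ in $G$ and vice versa.

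The main obstacle I anticipate is this bookkeeping in the equivalence and in the DP: the treewidth algorithm of Section~\ref{Sec:treewidth} is written for unweighted vertex counts, so I must either (i) argue that in any valid $G'$-solution no contracted vertex is used in $X$ and then expand the DP to carry the total weight of the $V\setminus X$ side rather than its cardinality, or (ii) replace each $u_i$ by a single representative vertex and observe that connectivity, not cardinality, is what the contracted side must satisfy, handling the cardinality constraint $|X|=k$ entirely within $V_k$ while checking only that $V(G')\setminus X$ induces a connected subgraph and then padding back the $G_i$'s. Option (ii) is slightly delicate because the size of $V\setminus X$ must equal $n-k$, but since $X\subseteq V_k$ this is automatic. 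Modulo this careful handling, running the $k^{\oh(k)}$ treewidth DP on $G'$ (whose treewidth is $\oh(k)$) gives the claimed running time, and the theorem follows.
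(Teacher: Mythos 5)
Your proposal follows the paper's proof essentially verbatim: reduce to \rvertexpart by trying all $n$ choices of the designated vertex $v$, take the BFS ball $V_k$, contract the connected components of $G[V\setminus V_k]$, invoke the planar diameter-to-treewidth bound to get treewidth $\oh(k)$ for the contracted graph, and run the dynamic program of Section~\ref{Sec:treewidth}. The additional care you take with the contracted vertices (forbidding them from $X$ or weighting them, and verifying the equivalence of solutions under contraction) addresses a bookkeeping point the paper silently skips, but it does not change the approach.
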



\subsection{No Polynomial Kernels for \vertexpart in Planar Graph}\label{sec:planarnopolykernel}
In this section, we prove that no polynomial sized kernel exists for \vertexpart on planar graphs. 
Towards that, we show a composition for \cmcs on planar graphs in the following theorem.

\begin{theorem}$^\star$
	\vertexpart on planar graph parameterized by the solution size does not have a polynomial kernel.
\end{theorem}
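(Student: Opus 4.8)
The plan is to establish an OR-composition algorithm for \rvertexpart on planar graphs parameterized by the solution size $k$, and then invoke Theorem~\ref{composition_thm} together with the \nph-ness established in the hardness section. By the lemma relating the restricted and unrestricted versions (and since a polynomial kernel for \vertexpart would yield one for \rvertexpart after the $n$-fold branching, up to standard padding), it suffices to compose instances of the restricted variant where the designated vertex $v$ must lie in $X$. First I would preprocess the $t$ input instances $(G_1, v_1, k), \dots, (G_t, v_t, k)$ so that they all share the same parameter $k$ (instances with a different $k$ can be discarded or handled by the standard trick of bucketing by $k$, and one may assume $t \le 2^{k^{O(1)}}$ after a cheap brute-force step, so that $t$ itself is polynomially bounded in terms of what we need).

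The core construction: build $G'$ by taking the disjoint union of $G_1, \dots, G_t$ and adding a small ``selector'' gadget that forces the solution $X$ to live entirely inside exactly one $G_i$. A natural way to do this while preserving planarity is to attach each $v_i$ by an edge to a common new vertex $r$ (placing the $G_i$'s in disjoint faces around $r$ keeps the drawing planar), set the designated vertex of the new \rvertexpart instance to be $r$ itself, and pad each $G_i$ with a large pendant structure (many degree-one vertices, or a long path) so that the complement $G'[V' \setminus X]$ is forced to be connected precisely when $X$ sits inside one block together with $r$. The parameter of the composed instance should be $k' = k + O(1)$, which is certainly polynomial in $k$. The correctness direction ``some $(G_i,v_i,k)$ is a \yes-instance $\Rightarrow$ composed instance is \yes'' is the easy direction: take the solution $X_i \subseteq V(G_i)$ with $v_i \in X_i$, adjoin $r$, and verify both $G'[X_i \cup \{r\}]$ and its complement are connected using the pendant padding. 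For the converse, one argues that any valid $X$ in $G'$ of size $k'$ cannot split across two different $G_i$'s (the connectivity of $G'[X]$ together with $r$ being a cut vertex separating the blocks forces $X \setminus \{r\}$ into a single $G_i$), and cannot avoid $r$ (because $r \in X$ is imposed), and cannot ``eat'' the pendant padding (which would disconnect the complement, by the same pendant argument used in the \wh\ proof); hence $X \setminus \{r\}$ restricted to that $G_i$ is a genuine solution.

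The main obstacle I anticipate is making the connectivity of the \emph{complement} $G'[V' \setminus X]$ behave correctly in the presence of $t$ disjoint blocks: after removing $X$ from one block $G_i$, the complement must still connect the leftover of $G_i$ to all the other untouched blocks $G_j$ ($j \ne i$), so the selector vertex $r$ (or a small connected backbone through the $v_j$'s) must remain in the complement and must remain connected to everything. This is a genuine tension with forcing $r \in X$, and resolving it is the delicate part of the design: one likely fix is to not make $r$ itself the designated vertex but instead attach a small gadget — e.g. a triangle or a short path $r - r' - v_i$ pattern — arranged so that there is a dedicated ``solution-side'' vertex adjacent to all $v_i$ and a separate ``complement-side'' vertex also adjacent to all $v_i$, with pendant weights (again the $k+1$ pendants trick from the hardness proof) tuned so that exactly one block is ``selected'' into $X$ and the rest stay whole on the complement side. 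Planarity must be rechecked for whatever gadget is chosen, but since everything hangs off a single hub with the blocks in separate faces, planarity should survive. Once the gadget is pinned down, the size bound ($|V(G')| = O(\sum_i |V(G_i)|)$, $k' = O(k)$) and the polynomial running time are routine, and Theorem~\ref{composition_thm} finishes the argument.
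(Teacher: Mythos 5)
Your high-level strategy (an OR-composition combined with Theorem~\ref{composition_thm}) matches the paper's, but your actual construction is different and the difficulty you yourself flag --- keeping $G'[V'\setminus X]$ connected across the $t$ blocks once the hub is consumed by $X$ --- is a genuine, unresolved gap rather than a routine detail. Concretely: if $r$ is the designated vertex and $X=\{r\}\cup X_i$ with $X_i\subseteq V(G_i)$, then for $G'[X]$ to be connected you need $v_i\in X_i$; but $v_i$ is the only vertex of $G_i$ with neighbours outside $G_i$, so $V(G_i)\setminus X_i$ is then cut off from every other block and the complement is disconnected. Your proposed repair (a second, ``complement-side'' hub $r'$ adjacent to all the $v_i$) does not help, because $r'$ still reaches $G_i$ only through $v_i\in X$. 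Closing this gap requires certifying a second attachment vertex of the selected block that provably avoids $X$, and the restricted problem gives you no such vertex. There is also a secondary gap: you compose \rvertexpart while the target statement concerns \vertexpart, so you would additionally need a polynomial parameter transformation from the restricted to the unrestricted problem (a gadget forcing $v\in X$); the lemma you invoke only gives the Turing-reduction direction and does not by itself transfer kernel lower bounds.

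The paper resolves exactly this tension with an idea absent from your proposal: it composes the unrestricted problem directly, makes $k+1$ copies $G_i^1,\dots,G_i^{k+1}$ of each input graph, with copy $j$ embedded so that a \emph{distinct} vertex $v_i^j$ lies on the outer face and serves as its sole attachment point, and chains all copies into a path by single edges between consecutive attachment vertices. Each attachment vertex is then a cut vertex whose removal would disconnect the complement, so no solution may contain one; this forces $X$ entirely into a single copy, giving the forward direction. For the converse, since $|X|=k$ while the $k+1$ attachment vertices of the copies of $G_i$ correspond to $k+1$ distinct vertices of $G_i$, the pigeonhole principle yields a copy whose attachment vertex avoids $X$, and deleting $X$ from that copy leaves the entire chain connected. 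If you wish to salvage your star-shaped construction, you would need to import this ``$k+1$ copies plus pigeonhole'' mechanism (or an equivalent device) to guarantee an untouched attachment vertex in the selected block.
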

\begin{proof}
	Let we are given $t$ many planar graphs $G_1,G_2,\cdots,G_t$ where each $G_i$ contains $n_i$ many vertices. We construct a new graph $G'$ as follows. We create $k+1$ many copies for each planar graph $G_i$. Since each planar graph has planar embedding keeping any vertex in the outer face, let us assume $G_i^1,G_i^2,\cdots,G_i^{k+1}$ be the embeddings of $G_i$ keeping $v_i^1,v_i^2,\cdots,v_i^{k+1}$ on the outer face, respectively. Now add an edge between $v_i^{\ell}$ and $v_i^{\ell+1}$ for all $1 \leq \ell \leq k$. Also, add an edge between $v_i^{k+1}$ and $v_{i+1}^{1}$ for all $1 \leq i \leq t-1$.
	
	\begin{claim}
		$G'$ is a $ \sf YES$ instance if and only if at least one of the $G_i$ is a $\sf YES$ instance.
	\end{claim}
	
	\begin{proof}
		We prove the forward direction first. Assume that in $G'$, there exists $X \subseteq V(G')$ of cardinality $k$ such that $G[X]$ and $G[V(G') \setminus X]$ both are connected. Since, deleting any vertex $v_i^j$ from $G_i^j$ where $j\in[k+1]$ makes the graph $G'$ disconnected. Thus $X$ can not contain any vertex $v_i^j$ from $G_i^j$ where $j \in [k+1]$. Hence, we can assume that $X\cap \{v_i^j\}=\emptyset$ in $G_j$ for $1\leq j \leq k+1$, i.e $X$ is contained completely inside any one of $G_i^j$ without containing any vertex at the outer face that shares edges with other copies of $G_i$. Therefore, $G[X]$ is a connected graph on $k$ vertices inside $G_i^j$ and $G[V(G_i^j)\setminus X]$ is connected. Hence $G_i$ is a $\sf YES$ instance.
		
		Now, we prove the reverse direction. Assume that there is a graph, $G_i$ in which there exists $X \subseteq V(G_i)$ of cardinality $k$ such that $G[X]$ and $G[V(G_i) \setminus X]$ both are connected. Without loss of generality, assume $X = \{v_1,v_2,\cdots,v_k\}$. Since $G_i$ contains $X$, all the copies of $G_i$ in $G'$ also contain $X$. Now, we show that there exists a copy of $G_i$, say $G_i^p$ in $G'$, that does not contain any vertex of $X$ on the outer face which shares edges with other copies of $G_i$. As $|X|=k$, observe that there can be at most $k$ many different copies of $G_i$ in $G'$ which has a vertex of $X$ on the outer faces which shares edges with the other copies. Hence by pigeon hole principle, there exists at least a copy of $G_i$ in $G'$ which has no vertex on the outer face that shares edges with other copies. Thus, deleting $X$ from that copy will not disconnect any $G_i^a$ or $G_j^b$ in the graph $G'$. Therefore, $G'$ is also a $\sf YES$ instance.
	\end{proof}
\end{proof}
 

\section{\vertexpart on Unit Disk Graphs} \label{sec:udg}
Let we are given an unit disk graph $G(V,E)$ and its representation $(\D,C)$. Let $V=\{v_1,v_2, \cdots, v_n \}$ and for each $v_i$, $D_i$ be the corresponding disk centered at $c_i$. Here $\D=\{D_i|~1\leq i\leq n\}$ and $C=\{c_i|~1\leq i \leq n\}$. 
\begin{red_rule}\label{red_rule:udg_components}
	If the graph $G$ has more than two connected components the return $\sf NO$.
\end{red_rule}
\begin{lemma}
	\rr~\ref{red_rule:udg_components} is safe and can be implemented in polynomial time. 
\end{lemma}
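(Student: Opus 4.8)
The plan is to establish safeness via a short structural observation about connected partitions, and then to note that the rule is computed by a single linear-time graph traversal.

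\textbf{Safeness.} I would prove the contrapositive: if $(G,n_1,n_2)$ is a \yes-instance of \vertexpart then $G$ has at most two connected components. Indeed, let $\{V_1,V_2\}$ be a witnessing partition, so $|V_i|=n_i$ and $G[V_i]$ is connected for $i\in\{1,2\}$. Since the $n_i$ are positive by the problem statement, each $V_i$ is nonempty; and because $G[V_1]$ is connected and nonempty, all vertices of $V_1$ lie in one and the same connected component of $G$, say $C_1$. Analogously $V_2\subseteq C_2$ for some connected component $C_2$. As $V_1\cup V_2=V(G)$, we obtain $V(G)\subseteq C_1\cup C_2$, so $G$ has at most two connected components. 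Taking the contrapositive, whenever $G$ has three or more connected components the instance is a \no-instance, and returning \no is therefore correct; this is exactly what \rr~\ref{red_rule:udg_components} does, so the rule is safe.

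\textbf{Running time.} The connected components of $G$ can be computed, and their number counted, by a breadth-first (or depth-first) search in time $\oh(n+m)$, which is polynomial in the size of the input; the rule triggers precisely when this count exceeds two. Note that neither the statement nor the argument uses the unit disk structure, so the rule and its proof are valid for arbitrary graphs.

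\textbf{Main obstacle.} There is essentially no difficulty in this step; the only points worth being careful about are (i) that the argument requires no case analysis on whether $C_1=C_2$, since $V(G)\subseteq C_1\cup C_2$ bounds the number of components by two in either situation, and (ii) that the hypothesis $n_1,n_2\geq 1$ is what makes both parts nonempty, so the ``each $V_i$ lies in a single component'' step is never vacuous.
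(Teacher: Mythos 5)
Your proof is correct and rests on the same key observation as the paper's: each connected part of the partition must lie entirely inside a single connected component, so a \yes-instance has at most two components, and the count is obtained by one BFS/DFS. The paper states this in the direct form (with at least three components, deleting a connected set still leaves the remainder disconnected), but this is the same argument up to contraposition.
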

\begin{proof}
	Assume $G$ has at least three connected components. After deleting any subset $X$ of cardinality $k$ from the graph such that $G[X]$ is connected, the graph remains disconnected. The number of components in the given graph can be checked in polynomial time. \end{proof}

Let us assume that the graph has two connected components. Observe that the given instance is a \yes instance if and only if at least one of the two components contains exactly $k$ vertices. Otherwise, the instance is a \no instance. The number of components in the given graph and the number of vertices in each component can be checked in polynomial time. Thus now onward we assume that $G$ is connected. Let us consider a $(\frac{1}{2}\times \frac{1}{2})$ square grid on the plane. Assume $V_S$ be the set of centers of the disks that are contained inside the cell $S$; more formally, $V_S=\{v_i|~c_i\in S\}$. We also define $N(V',S)$ as the set of vertices in the cell $S$ that are neighbors of vertices in $V'$. For any cell $S$ in the grid, next, we prove that if $S$ contains at least $k+24$ centers then the given instance is a \yes instance.

\begin{red_rule}\label{red_rule:udg_k+24}
	If there exists a cell $S$ containing at least $k+24$ many centers then return \yes.
\end{red_rule}
\begin{lemma}\label{lemma:udg_k+24}
	\rr~\ref{red_rule:udg_k+24} is safe and can be implemented in polynomial time.
\end{lemma}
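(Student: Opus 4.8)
The claim is that Reduction Rule~\ref{red_rule:udg_k+24} is safe: if some $(\frac12\times\frac12)$ grid cell $S$ contains at least $k+24$ disk centers, then the instance is a \yes instance. Recall we have already reduced to the case that $G$ is connected, so it suffices to produce a set $X\subseteq V(G)$ with $|X|=k$, $G[X]$ connected, and $G[V\setminus X]$ connected. The plan is to build $X$ entirely out of vertices whose centers lie in $S$, together possibly with a few extra vertices needed to repair connectivity, and to argue that the rest of the graph stays connected because everything in $S$ is mutually adjacent.

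**Key structural observation.** Any two disks of radius $\tfrac12$ whose centers lie in the same $(\tfrac12\times\tfrac12)$ cell are at center-distance at most $\tfrac{\sqrt2}{2}<1$, so the two unit disks intersect; hence $V_S$ is a clique in $G$. This is the engine of the whole argument. Consequently, for the connectivity of $G[X]$ we may freely pick $X$ as \emph{any} $k$-subset of $V_S$ and it is automatically connected. The only real work is to choose that $k$-subset so that deleting it leaves $G-X$ connected. I would argue as follows: let $V\setminus X$ still contain at least $24$ vertices of $V_S$ (possible since $|V_S|\ge k+24$), call this leftover clique $K=V_S\setminus X$. Since $G$ was connected and $V_S$ is a clique, every connected component of $G-V_S$ has at least one vertex adjacent (in $G$) to some vertex of $V_S$. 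The danger is that such a vertex might be adjacent only to vertices we put into $X$. So the crux is: can we always choose the $k$ deleted vertices of $V_S$ so that no ``bridge'' neighbour outside $S$ loses all of its $V_S$-neighbours?

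**The counting / packing step (the main obstacle).** Here is where the constant $24$ enters, and this is the step I expect to be the technical heart. A vertex $u\notin V_S$ adjacent to a vertex of $V_S$ has its center within distance $1$ of $S$, i.e. in the $(\tfrac52\times\tfrac52)$ ``fattened'' square around $S$; but more usefully, each such $u$ is adjacent to \emph{all} vertices of $V_S$ whose centers lie within distance $1$ of $c_u$, and by a standard packing bound the set of centers within distance $1$ of any fixed point, when those centers are pairwise within a $(\tfrac12\times\tfrac12)$ cell — actually the cleaner way: partition the plane near $S$, or rather observe that the neighbours of $u$ inside $S$ form a sub-clique, and the obstruction vertices $u$ that are adjacent to \emph{few} vertices of $V_S$ are the only ones that constrain us. I would set $X$ to be $k$ vertices of $V_S$ chosen to \emph{avoid} a small ``protected'' subset $P\subseteq V_S$ of size $24$, where $P$ is chosen to hit, for every outside component $C$ of $G-V_S$, at least one $V_S$-neighbour of $C$; then $G-X \supseteq K \supseteq P$ keeps every outside component attached through $P$, and $K$ itself is a clique, so $G-X$ is connected. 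The point is that $24$ such protected vertices suffice: each outside component already reaches $V_S$, and one can show the number of ``distinct attachment options'' that must be preserved is bounded by a constant coming from a packing argument — the centers of $S$ lie in a $\tfrac12\times\tfrac12$ square, and any outside vertex within distance $1$ of $S$ sees a whole disk's worth of them; a disk of radius $1$ centered near $S$ covers the entire cell $S$, so in fact every outside neighbour of $V_S$ whose center is ``deep enough'' sees all of $V_S$, and only outside vertices near the boundary ring of the fattened square are problematic, and a packing/area argument on that ring caps the number of essentially-different constraints by $24$.

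**Assembly and polynomial time.** Putting it together: compute the grid and the cell $S$ (polynomial time), identify $V_S$ and the components of $G-V_S$ (polynomial time), select the protected set $P$ of size at most $24$ of $V_S$-vertices covering all outside components, let $X$ be any $k$ vertices of $V_S\setminus P$ (valid since $|V_S|-24\ge k$), and output $X$ as a witness; this certifies a \yes answer, so the rule is safe, and every step is clearly polynomial. The main obstacle, as flagged, is pinning down the exact packing argument that makes $24$ the right constant — i.e.\ showing that preserving a constant number of carefully chosen vertices of $V_S$ is enough to keep every outside component attached — and I would expect the author's proof to carry out precisely that area/packing estimate over the annular region $(\tfrac52\times\tfrac52)\setminus(\tfrac12\times\tfrac12)$ around $S$.
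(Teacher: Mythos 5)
Your skeleton matches the paper's proof: both exploit that $V_S$ is a clique, both take $X$ to be $k$ vertices of $V_S$ that avoid a protected set of at most $24$ vertices kept back to preserve the attachment of the outside world, and both conclude that $G[X]$ is trivially connected while $G-X$ stays connected through the leftover clique. However, the step you yourself flag as ``the main obstacle'' --- why a protected set of size $24$ always suffices --- is precisely the content of the lemma and is left unproven in your write-up; moreover, the annulus/area packing you gesture at is not the argument that produces the constant $24$. The actual argument is a cell count: every neighbour of a vertex of $V_S$ has its center within distance $1$ of $S$, hence inside the $5\times 5$ block of $(\frac12\times\frac12)$ cells centered at $S$, i.e.\ in one of at most $24$ cells other than $S$. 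Each of those cells also induces a clique, so all vertices of $N(V_S)\setminus V_S$ lying in a fixed cell $S'$ belong to a single component of $G-V_S$; since $G$ is connected, every component of $G-V_S$ meets one of these $24$ cells, and distinct components cannot share a cell. Hence $G-V_S$ has at most $24$ components (equivalently, marking one vertex of $N(V_{S'},S)$ per nonempty cell $S'$ protects at most $24$ vertices), which is exactly the bound your protected set $P$ needs. Without this observation your proof does not go through, because a priori nothing you wrote rules out $G-V_S$ having many more than $24$ components each attached to a different vertex of $V_S$.

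Once that count is supplied, the rest of your argument is fine and is in fact a cleaner way to finish than the paper's: you observe that every vertex of $G-X$ lies either in the clique $K=V_S\setminus X\supseteq P$ or in a component of $G-V_S$, and each such component is glued to $K$ through its protected vertex, so $G-X$ is connected in one stroke. The paper instead reduces to showing that all of $N[X]$ lies in one component of $G-X$ and runs a three-way case analysis on where $x$ and $y$ sit relative to $S$; your component-based assembly avoids that case analysis. The polynomial-time implementation claim is unproblematic in both versions.
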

\begin{proof}
	Select an arbitrary cell $S$ with at least $k+24$ centers. Observe that any vertex $v\in V_S$ can have a neighbor inside the circle drawn centering $v$ with radius one. That circle intersects at most $25$ cells including $S$. Observe that, these $25$ cells form a $5\times 5$ grid keeping $S$ as the center.
	Let $A_S$ be the set of cells where a vertex $v\in V_S$ can have neighbors. Now, for each such cell $S'\in A_S$ if $N(V_{S'}, S)\neq \emptyset$, then mark an arbitrary vertex in $N(V_{S'}, S)$. As $v$ can have neighbors in other $24$ cells, we can mark at most $24$ vertices in $S$. Thus there are at least $k$ unmarked vertices in $S$. Let $X$ be a set of any arbitrary $k$ many unmarked vertices. Observe that, any two disks having a center inside a cell intersect. Thus, all the centers in a cell form a clique. Hence $G[X]$ is also a clique on $k$ vertices. Observe that to show $G[V\setminus X]$ is connected it is enough to show that all the vertices in $N[X]$ are connected in $G[V\setminus X]$. Let $x$ and $y$ be any two vertices from $N[X]$ in $G[V\setminus X]$. We show that there exists a path between $x$ and $y$ in $G[V\setminus X]$. There can be three cases based on where the vertices $x$ and $y$ lie. First, both $x, y$ in $S$. As all the vertices in $S$ form a clique, after deletion of $X$ remaining vertices in $S$ also induce a clique. Thus $x$ and $y$ are adjacent. The second case is when exactly one of $x$ and $y$ is in $S$. Let $x\in S$ and $y\in S'$. If any vertex $q$ in $S'$ was adjacent to at least one vertex $p$ in $S$ then there is marked vertex $p$ in $S$ which is not in $X$. Observe that, $p$ is adjacent to $x$ and $q$ is adjacent to $y$. Hence there exists a path between $x$ and $y$ in $G[V\setminus X]$. If $S'$ does not have any vertex adjacent to a vertex in $S$, then $y$ is connected to $x$ via vertices in other cell $S''$. Similarly, we can prove that any vertex in $S''$ is connected to $x$. Hence $x$ and $y$ are connected by a path. Now we consider the last case where both $x$ and $y$ are not in $S$. Assume $y\in V_{S'}$ and $x\in V_{S''}$. If any vertex $q$ in $S'$ was adjacent to at least one vertex $p$ in $S$ then there is marked vertex $p$ in $S$ which is not in $X$. Observe that, $q$ is adjacent to $y$. Again, if there exists a vertex $r$ in $S''$ that is adjacent to at least a vertex $s$ in $S$ then we can argue similarly and conclude that there exists a path between $x$ and $s$ in $G[V\setminus X]$. Observe that, $r$ and $p$ are adjacent in  $G[V\setminus X]$, which connects $x$ and $y$ in $G[V\setminus X]$. Moreover, we can find such a subset $X$ of vertices in polynomial time. This concludes the proof.
\end{proof}

Now onward we can assume that each cell has at most $k+23$ centers. Next, we give a turing kernel for \vertexpart on unit disk graphs in the following theorem. Toward doing so we first prove the following theorem.

\begin{theorem}$^\star$\label{th:udg_kernel}
	The \rvertexpart has a $\oh(k^3)$ kernel on unit disk graphs.
\end{theorem}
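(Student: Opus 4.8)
The plan is to exploit the required vertex $v\in X$ as a BFS root: keep the portion of the (connected) graph that can possibly meet $X$, and collapse everything far away into a bounded number of ``blob'' vertices. Since $v\in X$, $|X|=k$ and $G[X]$ is connected, a spanning tree of $G[X]$ rooted at $v$ has depth at most $k-1$, so $X\subseteq V_{k-1}$, where $V_j$ denotes the set of vertices at BFS-distance at most $j$ from $v$ and $L_j$ the $j$-th BFS layer. I would first record the geometric size bound: a BFS path of length $j$ moves centres by at most $1$ per step (two co-cell disks are adjacent because the cell diagonal $\tfrac{\sqrt 2}{2}<1$), so the centres of $V_j$ lie in a disk of radius $j$ around $c_v$; this disk meets at most $(4j+2)^2$ cells, and each cell carries at most $k+23$ centres by \rr~\ref{red_rule:udg_k+24}, whence $|V_j|\le (4j+2)^2(k+23)=\oh(j^2k)$. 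In particular $|V_{k+1}|=\oh(k^3)$. If $V=V_{k-1}$ we are already done, so assume otherwise.

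Next I would define the kernel $G'$: take $G[V_k]$, and for each connected component $G_1,\dots,G_\ell$ of $G[V\setminus V_k]$ add one new vertex $u_i$ made adjacent to $N_G(V(G_i))$ (this neighbourhood is contained in $L_k$ because the $G_i$ are exactly the components of $G\setminus V_k$, so there are no edges between distinct $G_i$ and no edges from $G_i$ deeper than layer $k$). The number of components is at most the number of vertices of $V\setminus V_k$ having a neighbour in $V_k$, and every such vertex lies in $L_{k+1}$; hence $\ell\le|V_{k+1}|=\oh(k^3)$ and $|V(G')|=|V_k|+\ell=\oh(k^3)$. The construction is polynomial, and we keep the same $k$ and $v$.

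The heart of the proof is the equivalence $(G,k,v)\in\yes\iff(G',k,v)\in\yes$. The forward direction is routine: a solution $X\subseteq V_{k-1}$ is disjoint from every $G_i$, so $X\subseteq V(G')$, $G'[X]=G[X]$ is connected, and $G'[V(G')\setminus X]$ is obtained from the connected graph $G[V\setminus X]$ by contracting the connected subgraphs $G_i$, hence is connected. The only real obstacle is the reverse direction: a solution $X$ of $G'$ that uses a blob vertex $u_i$ need not lift, since $u_i$ represents an arbitrarily large connected piece. I would rule this out by noting that the only edges of $G'$ not present in $G[V_k]$ are the incidences between the $u_i$'s and $L_k$, so any $v$--$w$ walk in $G'$ with $w\in L_k$ either avoids all $u_i$'s and is a walk in $G[V_k]$ of length $\ge d_G(v,w)=k$, or it reaches a vertex of $L_k$ before meeting its first $u_i$ and thus has length $\ge k+1$; in either case $d_{G'}(v,w)\ge k$, i.e.\ $d_{G'}(v,L_k)\ge k$. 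Now if a solution $X$ of $G'$ contained some $u_i$, then since $G'[X]$ is connected on $k$ vertices with $v\in X$ there is a $v$--$u_i$ path in $G'[X]$ of length $\le k-1$; its penultimate vertex lies in $N_{G'}(u_i)\subseteq L_k$ and is at $G'$-distance $\le k-2$ from $v$, contradicting $d_{G'}(v,L_k)\ge k$. Hence $X\subseteq V_k\subseteq V$, so $G[X]=G'[X]$ is connected, and $G'[V(G')\setminus X]$ equals $G[V\setminus X]$ with the connected subgraphs $G_i$ contracted; since un-contracting connected subgraphs preserves connectivity, $G[V\setminus X]$ is connected, and $X$ solves $(G,k,v)$. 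The same argument applies verbatim to the $\star$-version, as $G[X]=G'[X]$ forces ``$G'[X]$ contains a $P_k$'' to be equivalent to ``$G[X]$ contains a $P_k$''. (The only edge cases, $k\le 1$ or $V=V_{k-1}$, are handled directly and yield an instance of size $\oh(k^3)$.)

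I expect the step requiring the most care to be this last one — showing that the contracted instance has no spurious solution through a blob vertex; the distance bound $d_{G'}(v,L_k)\ge k$, combined with the fact that $X$ is a $k$-vertex connected set containing $v$, is the key point, while the rest (the geometric cardinality bounds and the standard contraction/uncontraction arguments) is routine.
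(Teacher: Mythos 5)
Your proof is correct, and its core is the same counting argument the paper uses: since $v\in X$, $|X|=k$ and $G[X]$ is connected, all of $X$ lies within (graph, hence geometric) distance $k$ of $v$, that region meets $\oh(k^2)$ grid cells, and \rr~\ref{red_rule:udg_k+24} caps each cell at $k+23$ centres, giving $\oh(k^3)$ candidate vertices. Where you genuinely go beyond the paper is that the paper's proof stops at this cardinality bound and never actually specifies the reduced instance or argues equivalence; you supply the missing kernelization by importing the BFS-ball-plus-contraction construction from Section~\ref{sec:planarfpt} (contract each component of $G[V\setminus V_k]$ to a blob attached to $L_k$), bound the number of blobs by $|V_{k+1}|=\oh(k^3)$, and --- the one step that is not routine --- show via the distance bound $d_{G'}(v,L_k)\ge k$ that no solution of the contracted instance can pass through a blob vertex, so solutions lift back to $G$. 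That blob-exclusion argument is exactly the piece a careful reader would demand of the paper's one-line proof, and your version of it is sound (the first blob on any $v$--$w$ walk is preceded by an $L_k$ vertex reached by a genuine path of $G[V_k]$, so blobs cannot create shortcuts). In short: same approach, but your write-up closes a real gap in the paper's argument rather than introducing a different method.
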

\begin{proof}
	Let $v\in X$ be the given vertex. Observe that all the vertices in $X$ lie inside the circle drawn centering $v$ with radius $k$. Thus there are at most $\oh (k^2)$ many cells which can contain the other vertices of $X$. By \rr~\ref{red_rule:udg_k+24}, no cell contains more than $k+23$ vertices. Thus there can be at most $\oh(k^3)$ vertices, which can be the other endpoint of the path. Therefore we have $\oh(k^3)$ kernel.
\end{proof}
Observe that from the graph $G$, for each $v_i$ we can create a restricted problem. From Theorem~\ref{th:udg_kernel}, \rvertexpart has a $\oh(k^3)$ kernel. Thus we have the following corollary.
\begin{corollary}$^\star$
	The \vertexpart on unit disk graphs admits a polynomial turing kernel.
\end{corollary}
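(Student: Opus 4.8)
The statement to prove is the corollary: \vertexpart on unit disk graphs admits a polynomial Turing kernel. The plan is to combine the lemma reducing \vertexpart to \rvertexpart (by trying all $n$ choices of the distinguished vertex $v$) with the polynomial kernel for \rvertexpart from Theorem~\ref{th:udg_kernel}. Concretely, given a unit disk graph $G$ with its representation, I would first apply \rr~\ref{red_rule:udg_components} and the subsequent observations to handle the disconnected cases in polynomial time, so that we may assume $G$ is connected; I would then apply \rr~\ref{red_rule:udg_k+24} so that every grid cell has at most $k+23$ centers.

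Next, for each vertex $v_i \in V(G)$, I would form the restricted instance $(G, k, v_i)$ of \rvertexpart, and invoke the $\oh(k^3)$ kernelization of Theorem~\ref{th:udg_kernel} to obtain an equivalent instance of size $\oh(k^3)$. Since the original instance is a \yes instance of \vertexpart if and only if at least one of these $n$ restricted instances is a \yes instance (by the lemma relating the two versions), this gives a Turing kernel: we make $n$ oracle calls, each on an instance whose size is polynomial in $k$, and we can reconstruct the answer for the original instance from the oracle's answers (taking the disjunction). All of the reduction rules and the kernelization run in polynomial time, so the whole procedure runs in polynomial time plus the $n$ oracle calls, which is precisely the definition of a polynomial Turing kernel.

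I expect the argument itself to be essentially a bookkeeping assembly of already-established pieces, so the main subtlety — and the only place where care is needed — is making sure the kernelization for \rvertexpart is genuinely a kernelization in the formal sense (maps to an \emph{equivalent} instance of the \emph{same} parameterized problem with size bounded by a function of $k$), so that the chain of $n$ such calls legitimately constitutes a Turing kernel rather than merely a collection of small instances. Given Theorem~\ref{th:udg_kernel} as stated, this is immediate, and the corollary follows.
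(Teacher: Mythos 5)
Your proposal matches the paper's argument exactly: the paper obtains the corollary by creating a restricted instance for each choice of $v_i$ and invoking the $\oh(k^3)$ kernel of Theorem~\ref{th:udg_kernel} on each, taking the disjunction of the answers. Your write-up is in fact more careful than the paper's (which states this in two sentences), but the approach is the same.
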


\section{\fpt Algorithm for \edgepart}\label{sec:edge}
In this section, we present a \fpt algorithm to solve the \edgepart. The idea used here is similar to the \fpt algorithm for the  Longest path problem~\cite{}. Before we proceed with the algorithm, let us introduce few well-known results which will be essential for our algorithm.

Matroid theory has been used widely to develop many \fpt algorithms. In this section, we use representative sets defined on co-graphic matroid to design \fpt algorithm for \edgepart. We present a brief introduction to matroid and representative sets for the sake of completeness. For a more detailed discussion on these topics please refer to~\cite{PCBook,MatroidBookOxlay}.
\begin{definition}[\cite{MatroidBookOxlay}]
	A matroid $M$ is an ordered pair $(E, \I)$ consisting of a ﬁnite set $E$ and a
	collection $\I$ of subsets of $E$ having the following three properties:
	\begin{description}
		\item[(I1)] $\emptyset \in \I$.
		\item[(I2)] If $I\in \I$ and $I'\subset I$, then $I'\in\I$.
		\item[(I3)] If $I_1,I_2\in\I$ and $|I_1|<|I_2|$, then there is an element $e\in I_1\setminus I_1$ such that $I_1\cup\{e\}\in\I$.
	\end{description}
\end{definition}	

Let $M(E, \I)$ be a matroid. The members of $\I$ are called the independent sets of $M$, and $E$ is the ground set of $M$. For any two sets $A,B\subset E$, we say that $A$ ﬁts with $B$ if $A$ is disjoint from $B$ and
$A \cup B$ is independent.

\begin{definition}[\cite{PCBook}]
	Let $M$ be a matroid and $\A$ be a family of sets of size $p$ in $M$. A subfamily $\A'\subseteq \A$ is said to $q$-represent $\A$ if for every set $B$ of size
	$q$ such that there is an $A \in \A$ that ﬁts $B$, there is an $A'\in \A'$ that also ﬁts
	$B$.  If $\A'$ $q$-represents $\A$, we write $\A'\repset{q} \A$.
\end{definition}

Next, we have the following theorem where $\omega$ is the matrix multiplication constant.
\begin{theorem}[\cite{PCBook}]\label{theorem:repset}
	There is an algorithm that, given a matrix $M$ over a ﬁeld $GF(s)$, representing a matroid $M = (U, F)$ of rank $k$, a $p$-family $\A$ of independent sets in $M$, and an integer $q$ such that $p  + q  = k$, computes a $q$-representative family $\A'\repset{q}\A$ of size at most ${{p+q} \choose {p}}$ using at most $\oh(|\A|({{p+q} \choose {p}} p^\omega+{{p+q} \choose {p}}^{\omega-1}))$
	operations over $GF(s)$.
\end{theorem}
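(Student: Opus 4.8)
The plan is to prove the statement through the exterior-algebra encoding of independent sets, which is the standard route to algorithmic two-families theorems. Since $M=(U,F)$ has rank $k$ and is given by a matrix over $GF(s)$, identify every $e\in U$ with its column vector $\vec{e}\in GF(s)^k$, so that a subset of $U$ is independent exactly when the corresponding columns are linearly independent. To each $p$-element independent set $A=\{a_1,\dots,a_p\}$ associate the multivector $\mu(A)=\vec{a_1}\wedge\cdots\wedge\vec{a_p}$ in the $p$-th exterior power $\Lambda^p(GF(s)^k)$; written in coordinates, $\mu(A)$ is the vector of all $p\times p$ minors of the $k\times p$ submatrix on $A$, so it lives in a space of dimension $\binom{k}{p}=\binom{p+q}{p}$. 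The one combinatorial fact we need is the wedge criterion: for a $q$-set $B$ with associated multivector $\mu(B)\in\Lambda^q(GF(s)^k)$, the set $A$ fits $B$ (that is, $A\cap B=\emptyset$ and $A\cup B\in F$) if and only if $\mu(A)\wedge\mu(B)\neq 0$ in $\Lambda^{p+q}(GF(s)^k)=\Lambda^k(GF(s)^k)\cong GF(s)$; this is immediate once one observes that $p+q=k$, so the combined wedge is a top form, nonzero precisely when the $p+q$ columns are distinct and linearly independent.

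Given the wedge criterion, the problem collapses to linear algebra. For a fixed $B$, the map $v\mapsto v\wedge\mu(B)$ is a linear functional $\phi_B\colon\Lambda^p(GF(s)^k)\to GF(s)$. So I would pick $\A'\subseteq\A$ such that $\{\mu(A')\colon A'\in\A'\}$ is a basis of the span $W=\mathrm{span}\{\mu(A)\colon A\in\A\}$ (discarding, along the way, any $A$ whose multivector is already spanned by the ones kept). For any $B$: some $A\in\A$ fits $B$ iff $\phi_B$ is nonzero somewhere on $W$ iff $\phi_B$ is nonzero on some basis vector $\mu(A')$ iff some $A'\in\A'$ fits $B$. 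Hence every such $\A'$ is a $q$-representative family, and $|\A'|=\dim W\le\dim\Lambda^p(GF(s)^k)=\binom{p+q}{p}$, matching the claimed size bound.

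It remains to realize this in the stated running time. First compute $\mu(A)$ for every $A\in\A$: each multivector has $N:=\binom{p+q}{p}$ coordinates, each a $p\times p$ minor, and all of them for one set can be produced in $\oh(N\,p^\omega)$ field operations by the usual minor-enumeration scheme, for $\oh(|\A|\,N\,p^\omega)$ in total. Then extract a maximal linearly independent subfamily: maintain the kept multivectors in reduced row-echelon form and, for each new vector $\mu(A)$, test membership in the current span and insert it if independent. The naive cost is $\oh(N^2)$ per vector; to bring this down to $\oh(N^{\omega-1})$ amortized one batches the incoming vectors and performs the reductions via fast (rectangular) matrix multiplication rather than one vector at a time. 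Summing gives $\oh(|\A|(N\,p^\omega+N^{\omega-1}))$ operations over $GF(s)$, which is the claim.

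The main obstacle is this last step: the exterior-algebra dictionary and the basis argument are clean, and the wedge criterion needs only a Cauchy--Binet-style identity, but pushing the per-vector basis-maintenance cost down to $N^{\omega-1}$ (rather than the straightforward $N^2$) requires the careful batched Gaussian-elimination-with-fast-matrix-multiplication analysis, and one must also ensure the ground field $GF(s)$ is handled so that all arithmetic is genuinely counted in the operation bound.
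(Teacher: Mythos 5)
The paper does not actually prove this statement---it is imported verbatim from \cite{PCBook}---but your argument is precisely the standard exterior-algebra proof given in that source: encode each $p$-set $A$ by the wedge product of its column vectors, use the criterion that $\mu(A)\wedge\mu(B)\neq 0$ if and only if $A$ fits $B$ (valid here because $p+q=k$, so no truncation of the matroid is needed), output a subfamily whose multivectors form a basis of the span, and charge the running time to the minor computations plus a fast row-basis extraction. Your write-up is correct and essentially identical to the cited proof, including its reliance on the known $\oh(nm^{\omega-1})$-operation algorithm for computing a row basis of an $n\times m$ matrix, which is the only ingredient you leave as a black box.
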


Let $G$ be a graph. Let us define $\cgm(G)=(E,\I)$ where $\I=\{E\subseteq E(G)| G-E \text{ is connected}\}$. 
$\cgm(G)$ is called the co-graphic matroid.
For any vertex $v$, let $\E_v^p$ be a set of subset of $p$ edges $X$ such that $G[X]$ is connected, $G[X]$ contains $v$ and $G-X$ is connected. Formally
$$ \E_v^p=\{ X|~X\subseteq E(G), |X|=p, \text{ $G[X]$ and $G-X$ is connected}    \}$$

Observe that for all $v\in V$ and $1\leq p\leq k$, $\E_{v}^p$ is a $p-$family independent set in $\cgm(G)$. We can define $q-$repesentative family for $\E_{uv}^p$ with respect to $\cgm(G)$.

$$\F_{v}^{p,q}\repset{q} \E_{v}^p$$

Next we present a dynamic programming based algorithm to compute $\F_{v}^{p,q}$ for all $v\in V(G)$, $1\leq p\leq k$ and $1\leq q\leq k-p$.

\begin{theorem}
	There is an algorithm which computes for all $v\in V(G)$, $1\leq p\leq k$ $1\leq q\leq k-p$, $\F_{v}^{p,q}$ in time $\oh^*(5.2^k)$.
\end{theorem}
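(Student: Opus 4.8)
The plan is to compute the families $\F_v^{p,q}$ by a bottom-up dynamic program on $p$, in the spirit of the representative-families algorithm for \textsc{Longest Path}: instead of maintaining the (possibly exponential) family $\E_v^p$, we only ever keep a $q$-representative subfamily $\F_v^{p,q}\repset{q}\E_v^p$, which by Theorem~\ref{theorem:repset} can be taken of size at most $\binom{p+q}{p}$. All matroid operations are performed in $\cgm(G)$, or in a rank-$(p+q)$ truncation of it when $\mathrm{rank}(\cgm(G))>p+q$ (truncations of representable matroids are representable over a field of polynomially bounded size, with a representation computable in polynomial time); the case $p+q>\mathrm{rank}(\cgm(G))$ is trivial, since then no set of size $p$ can fit any set of size $q$, so $\F_v^{p,q}=\emptyset$ works. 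The base case is immediate: $\E_v^1=\{\{e\}: e\in E(G)\text{ incident to }v,\ G-e\text{ connected}\}$ has size at most $\dg(v)$, and one call to Theorem~\ref{theorem:repset} shrinks it to $\F_v^{1,q}$ of size at most $q+1$.

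For the inductive step, suppose all of $\F_u^{p-1,q+1}$, $u\in V(G)$, have been computed. To build $\F_v^{p,q}$ I would first form the candidate family
$$\A_v^p=\Big\{\,Y\cup\{e\}\ :\ e\in E(G),\ Y\in\textstyle\bigcup_{u\in V(G)}\F_u^{p-1,q+1},\ e\notin Y,\ G[Y\cup\{e\}]\text{ connected},\ v\in V(Y\cup\{e\}),\ G-(Y\cup\{e\})\text{ connected}\,\Big\},$$
each membership test being a polynomial-time connectivity check (recall $Y\cup\{e\}\in\I(\cgm(G))$ iff $G-(Y\cup\{e\})$ is connected). Since $|\F_u^{p-1,q+1}|\le\binom{(p-1)+(q+1)}{p-1}=\binom{p+q}{p-1}$, we get $|\A_v^p|=\oh^*\!\big(\binom{p+q}{p}\big)$, and $\A_v^p$ is a $p$-family of independent sets of the right rank. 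Applying Theorem~\ref{theorem:repset} to it with this $p$ and $q$ yields $\F_v^{p,q}\repset{q}\A_v^p$ of size at most $\binom{p+q}{p}$, at cost $\oh^*\!\big(|\A_v^p|\cdot\binom{p+q}{p}^{\omega-1}\big)=\oh^*\!\big(\binom{p+q}{p}^{\omega}\big)$ (the other term of Theorem~\ref{theorem:repset} is dominated, using $\omega\ge 2$). Summing over all $v\in V(G)$ and all admissible $(p,q)$, grouped by $c:=p+q\le k$, the total running time is $\oh^*\!\big(\sum_{c\le k}\sum_{p\le c}\binom{c}{p}^{\omega}\big)\le\oh^*\!\big(\sum_{c\le k}2^{\omega c}\big)=\oh^*\!\big(2^{\omega k}\big)=\oh^*(5.2^k)$, where I used $\sum_{p}\binom{c}{p}^{\omega}\le\big(\sum_p\binom{c}{p}\big)^{\omega}=2^{\omega c}$ (valid since $\omega\ge 1$) and $2^{\omega}<5.2$.

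For correctness, Theorem~\ref{theorem:repset} already gives $\F_v^{p,q}\repset{q}\A_v^p$, so by transitivity of $q$-representation it suffices to prove, by induction on $p$, that $\A_v^p\repset{q}\E_v^p$ — the induction hypothesis being exactly $\F_u^{p-1,q+1}\repset{q+1}\E_u^{p-1}$ for all $u$. Take $B$ with $|B|=q$ and $Z\in\E_v^p$ with $Z\cap B=\emptyset$ and $Z\cup B$ independent; I must produce $A\in\A_v^p$ with $A\cap B=\emptyset$ and $A\cup B$ independent. The natural move is to remove from the connected edge-subgraph $G[Z]$ a single edge $e$ whose deletion leaves it connected — such an $e$ always exists: a cycle edge if $G[Z]$ has a cycle, otherwise $G[Z]$ is a tree with at least two leaves and we take the edge at a leaf other than $v$. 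Writing $Y:=Z\setminus\{e\}$ and letting $x$ be an endpoint of $e$ surviving in $V(Y)$, one checks $Y\in\E_x^{p-1}$ and that $Y$ fits $B\cup\{e\}$; the induction hypothesis then gives $Y'\in\F_x^{p-1,q+1}$ fitting $B\cup\{e\}$, and $A:=Y'\cup\{e\}$ fits $B$.

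The hard part — and the step I expect to be the main obstacle — is verifying that this $A$ really belongs to $\A_v^p$: membership needs $G[Y'\cup\{e\}]$ to be connected and to contain $v$, but $Y'$ is merely \emph{some} member of $\E_x^{p-1}$, so a priori $e$ need not be incident to $V(Y')$ and $v$ need not lie in $V(Y')$. Overcoming this forces a more careful choice of state: the marked vertex $v$ in the definition of $\E_v^p$ should record the endpoint at which the last edge was attached, so that in the exchange above $e$ is incident to the marked vertex of $Y$ — a vertex that every representative $Y'\in\F_x^{p-1,q+1}\subseteq\E_x^{p-1}$ is guaranteed to contain — and $\A_v^p$ must additionally be closed under attaching a pendant edge at the marked vertex. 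Rerunning the exchange with this refined $\E_v^p$ (and a correspondingly enriched recurrence that still adds only one edge per step) is the technical heart of the argument, the delicate cases being connected subgraphs that are not paths; since neither the bound $\binom{p+q}{p}$ on family sizes nor the shape of the recurrence changes, the running-time analysis and hence the $\oh^*(5.2^k)$ bound are unaffected. (These families, via one further one-edge extension, then decide \edgepart within the same time bound.)
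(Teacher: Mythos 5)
Your proposal takes the same route as the paper: a bottom-up dynamic program over $p$ that maintains $q$-representative families in the co-graphic matroid $\cgm(G)$, builds the level-$p$ candidate family by attaching a single edge to members of the level-$(p-1)$ families, prunes with Theorem~\ref{theorem:repset}, and bounds the total work by $\oh^*\bigl(\sum_{p}\binom{p+q}{p}^{\omega}\bigr)=\oh^*(2^{\omega k})=\oh^*(5.2^k)$. The size and running-time analysis is correct and matches the paper's. The refinement you propose at the end --- anchoring each family at the vertex where the last edge is attached --- is in fact exactly what the paper's recurrence already does: it forms $X\cup\{(u,v)\}$ with $X\in\F_{u}^{p-1,q+1}$, so the new edge is incident to a vertex ($u$) that every representative is guaranteed to contain, which disposes of your worry about connectivity of $Y'\cup\{e\}$.

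The step you flag as ``the technical heart'' is, however, a genuine gap, and the marked-vertex fix does not close it. The exchange argument requires decomposing an arbitrary $Z\in\E_v^p$ as $Y\cup\{(u,v)\}$ with the removed edge incident to the anchor $v$ and $Y\in\E_u^{p-1}$ (connected, containing $u$, with connected complement). Such a decomposition need not exist: if $G[Z]$ is a path with $v$ as an internal vertex, both edges at $v$ are non-pendant bridges of $G[Z]$, so deleting either one disconnects $G[Z]$; the same happens whenever every edge at the anchor is a non-pendant bridge. With the ``marked vertex $=$ attachment point of the last edge'' semantics, the recurrence only generates connected edge sets buildable by single-edge attachments at a moving anchor, which is a proper subfamily of $\E_v^p$, so the induction hypothesis $\A_v^p\repset{q}\E_v^p$ cannot be established this way. (The paper's own proof is silent on precisely this point: it invokes transitivity of $q$-representation, which reduces correctness to the unproved claim $\mytilde{\E_{v}^{p}}\repset{q}\E_v^p$.) A correct argument needs a different peeling --- e.g., removing a non-tree edge or a leaf edge of a spanning tree of $G[Z]$, which in turn forces the family to be anchored at more than one vertex so that both the global anchor $v$ and the attachment endpoint of the removed edge survive in every representative --- or a restriction to the path-like variant of the footnote, where the Longest-Path-style peeling at an endpoint does go through.
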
 

\begin{proof}
	Observe that $\E_{v}^1=\{(u,v) :(u,v)\in E(G)\}$. We set $\F_{v}^{1,k-1}=\E_{uv}^1$. We compute $\F_{v}^{p,q}$ in a bottom up fashion using the following recursive definition.
	We create an auxiliary set $\mytilde{\E_{v}^{p}}$ as follows
	      
		$$\mytilde{\E_{v}^{p}}=\{ 
	E'\cup (u,v)|\text{ where } (u,v)\in E(G), X\in \F_{u}^{p-1,q+1} \text{ and } G-\{E'\cup (u,v)\} \text{ is connected }
	\}$$

	Observe that cardinality of  $\mytilde{\E_{v}^{p}}$ is at most ${{p+q}\choose {p}}|E(G)|$ and $\mytilde{\E_{v}^{p}}$ is also a $p-$family independent set in $\cgm(G)$. Thus we can define a $q-$representative family $\mytilde{\F_{v}^{p,q}} \repset{q} \mytilde{\E_{v}^{p}}$.
	We can compute $\mytilde{\F_{v}^{p,q}}$  using the algorithm of Theorem~\ref{theorem:repset} in time $|\E_{uw}^{p-1,q+1}| {\choosepq^{\omega-1}} n^{\oh(1)}$ which is equals to ${\choosepq^{\omega}} n^{\oh(1)}$. Note that as $q-$representation is transitive~\cite{PCBook}, $\mytilde{\F_{v}^{p,q}}$ will $q-$represent $\E_{v}^p$. We set $\F_{v}^{p,q}=\mytilde{\F_{v}^{p,q}}$. Overall runtime of the algorithm is $\sum_{p=2}^{k} {\choosepq^{\omega}} n^{\oh(1)}$ which is $\oh^*(2^{k\omega})$. Due to a recent result~\cite{mm21}, it is known that $\omega\leq2.37286$. Therefore the runtime of the algorithm is bounded by $\oh^*(5.2^k)$.
	
\end{proof}

\bibliographystyle{plain}
\bibliography{reference}
\end{document}